\documentclass[11pt]{article}
\usepackage[margin=1in]{geometry}
\usepackage{amsthm,amsfonts,amsmath,amssymb}
\usepackage{mathtools}
\usepackage{enumerate,graphicx}
\usepackage{xspace}
\usepackage{boxedminipage}
\usepackage{url}
\usepackage{soul}
\usepackage{accents}
\usepackage{multirow}
\usepackage{algorithmicx}
\usepackage{algpseudocode}
\usepackage{booktabs}
\usepackage{enumitem}
\usepackage{fullpage}
\usepackage{tikz}
\usepackage{natbib}
\usepackage{tikz-cd}
\usepackage{thmtools}
\usepackage{thm-restate}
\usepackage{array}
\usepackage{booktabs}
\usepackage{threeparttable}
\usepackage{tcolorbox}
\usepackage{ulem}
\usepackage{subcaption}
\usepackage{hyperref}
\hypersetup{
    colorlinks=true,
    linkcolor=violet,
    filecolor=magenta,      
    urlcolor=cyan,
    citecolor=blue,
    pdffitwindow=true,
}\usepackage[capitalize, nameinlink]{cleveref}

\usepackage{algorithm}
\usepackage{algpseudocode}

\usepackage{color-edits}[showdeletions]

\newtheorem*{theorem*}{Theorem}

\newtheorem{lemma}{Lemma}

\newtheorem{claim}[lemma]{Claim}

\theoremstyle{definition}
\newtheorem{remark}[lemma]{Remark}

\newtheorem{observation}[lemma]{Observation}
\newtheorem{definition}[lemma]{Definition}

\newcommand{\eps}{\varepsilon}

 \usepackage{booktabs}
 \usepackage{tabularx}
 \usepackage{array}

\newcolumntype{Y}{>{\raggedright\arraybackslash}X}

\title{A Linear-Time Approximation Scheme for the Densest Subgraph Problem}

\author{  Elena Grigorescu\thanks{University of Waterloo. Email: \textsf{elena.grigorescu@uwaterloo.ca}} \quad  Mehrshad Taziki\thanks{ETH Zürich. Email: \textsf{mtaziki@ethz.ch}} }

\begin{document}

\maketitle
\begin{abstract}
    In the undirected \emph{Densest Subgraph Problem (DSG)} the goal is to output a subset  $S$ of vertices of a given graph $G$ that maximizes the quantity $|E(S)|/|S|$, where $E(S)$ is the set of edges in the subgraph induced by $S$. The problem is  well studied in both theory and practice, and it admits natural efficient exact algorithms, as well as near-linear time algorithms with a $(1-\eps)$ approximation ratio. However, all previously-known approximation schemes incur logarithmic factors in the size of the graph or other parameters of the graph. This raises the question of whether a linear time $(1-\varepsilon)$-approximation can be obtained for all $\varepsilon>0$.  We answer this question affirmatively by providing a $(1-\eps)$-approximation algorithm running in time 
    \begin{align*}
        O\left(\frac{n+m}{\varepsilon^3}\log \frac{1}{\varepsilon}\right)
    \end{align*}
    where $m$ and $n$ are respectively the number of edges and vertices of $G$. To the best of our knowledge, this is the first truly linear-time approximation scheme for the problem (when $\eps>0$ is a constant). Our algorithm uses assignments arising from a flow-based formulation together with a structural carving lemma. This lemma allows us  to progressively carve ``sparse'' parts of the graph while nearly preserving the densest subgraph, allowing us to shift heavy computations to smaller instances,  which eventually yields the mentioned runtime.

    Our framework also yields a  $(1/2 -\varepsilon)$-approximation for the \emph{Densest At-Least-$k$ Subgraph Problem}, where in addition to maximizing the density, we require the subgraph to have at least $k$ vertices. Our algorithm runs in time
    \begin{align*}
        O\left( \frac{(n+m) \log^2 n \log \frac{1}{\varepsilon}}{\varepsilon}  \right).
    \end{align*}
    This nearly matches the known $1/2$ approximation hardness while  running in near-linear time.
\end{abstract}

\section{Introduction}

The undirected \emph{Densest Subgraph Problem (DSG)} is fundamental to network design, combinatorial optimization, and graph mining. Given a graph $G=(V,E)$, the goal of DSG is to output a subset $S$ of vertices of $G$ that maximizes the quantity $d(S)=|E(S)|/|S|$, called \emph{the density of $S$}, where $E(S)$ is the set of edges in the subgraph induced by $S$. It is long known that the problem is solvable exactly and efficiently in time $\tilde{O}(nm)$ \cite{goldberg1984, gallo1989} via reductions to max flow, where $m=|E(G)|$ and $n=|V(G)|$. Nevertheless, the cost of exact flow computations can be prohibitive on the massive graphs that motivate many applications of dense subgraph discovery. Later Asahiro et al. \cite{asahiro1996} and Charikar \cite{charikar2000} gave a greedy peeling algorithm that repeatedly removes the vertex with the lowest degree from the graph and returns the densest solution seen. This simple procedure obtains a $\frac{1}{2}$-approximation solution for DSG in time $O(n+m)$.

Substantial recent works have thus aimed to give scalable algorithms that have near optimal approximation guarantees. These algorithms can be viewed as a mix of  \emph{peeling} algorithms, maximum flow and linear programming (LP) formulations, and continuous optimization approaches, including multiplicative weights updates (MWU) algorithms. Boob et al. \cite{boob2020} adapted the greedy peeling procedure into an efficient iterative variant. Later Chekuri, Quanrud, and Torres \cite{chekuri2022} proved a $(1-\varepsilon)$-approximation ratio for this algorithm in time $O\left( \frac{(n+m)\log n}{\varepsilon^2} \cdot \frac{\Delta}{\lambda^*} \right)$, where $\Delta$ is the maximum degree of the graph and $\lambda^*$ is the density of the densest subgraph.

The influential result of Charikar \cite{charikar2000} also brought up the use of  linear and convex programming  tools to address the densest subgraph problem, which were further exploited by Bahmani, Goel, and Munagala~\cite{bahmani2014} who developed a primal--dual algorithm achieving a $(1-\varepsilon)$-approximation in $O\left(\frac{(n+m) \log n}{\varepsilon^2} \right)$. More recently, Chekuri, Quanrud, and Torres~\cite{chekuri2022} gave a flow-based $(1-\eps)$-approximation algorithm motivated by Charikar's LP formulation running in time $O\left(\frac{(n+m) \log^2 n}{\varepsilon}\right)$. Nguyen and Ene~\cite{nguyenEne2024} obtained additional $(1-\varepsilon)$-approximation algorithms based on multiplicative weights and area convexity, running respectively in times $O\left( \frac{m \log \Delta \log m}{\varepsilon^2} \right)$ and  $O\left( \frac{m \log \Delta \log m \log (1/\varepsilon)}{\varepsilon} \right)$.

A further set of convex optimization approaches includes extensions of Charikar's LP to quadratic programming formulations, initially solved using  the Frank-Wolfe algorithm in time $O(m\Delta/\eps^2)$ \cite{danisch2017}, and later improved by Harb, Quanrud and  Chekuri \cite{harb2022, harbESA2023} to $O(\sqrt{m\Delta}/\eps)$, who also show that the flow-based approach in \cite{chekuri2022} is in fact equivalent to the Frank-Wolfe algorithm.

In applications where the output is intended to represent a substantial group or community, it is natural to impose a constraint on its size. Requiring the output to contain exactly $k$ vertices gives the well-known \emph{Densest $k$-Subgraph Problem}, which is substantially harder to approximate than unconstrained \textsc{DSG} (see, e.g., \cite{KortsarzPeleg93,asahiro1996,FeigeKortsarzPeleg01,AsahiroHassinIwama02,Khot06,GoldsteinLangberg09,BhaskaraEtAl10,BravermanEtAl17,Manurangsi17,JonesEtAl23SoS}). Two natural size-bounded relaxations are the \emph{Densest At-Least-$k$ Subgraph Problem}, in which the output is required to contain at least $k$ vertices (see, e.g., \cite{Andersen2009,khuller2009,GoldsteinLangberg09,ChenEtAl15Size,ChenEtAl16Size,Manurangsi18,chekuri2022,LaekhanukitManurangsiTrabelsi26}), and the \emph{Densest At-Most-$k$ Subgraph Problem}, in which the output is required to contain at most $k$ vertices (see, e.g., \cite{Andersen2009,khuller2009,GoldsteinLangberg09,ChenEtAl16Size}). Unlike \textsc{DSG}, all of these size-constrained variants are NP-hard~\cite{AsahiroHassinIwama02,Andersen2009,khuller2009}.

Several other variants of the DSG problem and notions of density have also been considered in the literature. The directed densest subgraph problem was proposed by Kannan and Vinay \cite{kannan1999analyzing}, where the goal is to find subsets $S, T$ of the vertices maximizing $\frac{|E[S,T]|}{\sqrt{|S||T|}}$.  Charikar \cite{charikar2000} obtained an exact algorithm for directed DSG based on an LP relaxation, as
well as a faster 2-approximation. Later, Khuller and Saha \cite{khuller2009} obtained a faster flow-based exact algorithm. Andersen \cite{andersen2010local} presented a local algorithm to find dense directed subgraphs. 

We refer the interested reader to the following surveys \cite{Farago2019, lanciano2024survey, Lee2010}.

\subsection{Our Results}

Prior results on DSG achieve running times that are nearly linear in the graph size. However, they retain logarithmic dependence on $n$ or other graph parameters \cite{bahmani2014,chekuri2022,nguyenEne2024}. This motivates the following question: 

\begin{quote}
\textit{Is there an algorithm that achieves a $(1-\varepsilon)$-approximation while running in time that is linear in the size of the graph for all fixed $\varepsilon > 0$?}
\end{quote}

We answer this question affirmatively by proving the following theorem.

\begin{restatable}{theorem}{thmMainUndir}\label{thm:undir}
    There exists a $(1-\varepsilon)$-approximation algorithm for the undirected DSG problem, running in time $O(\frac{n+m}{\varepsilon^3} \log \frac{1}{\varepsilon} )$. 
\end{restatable}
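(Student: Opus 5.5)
The plan is to combine a flow/orientation view of density with a carving lemma, so that most of the work is done on geometrically shrinking instances. Write $\lambda^*$ for the optimal density. By duality (Charikar's LP), $\lambda^* \le \lambda$ holds iff the edges admit a fractional orientation with every vertex load at most $\lambda$. A $(1-\varepsilon)$-certificate is therefore a pair consisting of a set $S$ with $d(S)\ge(1-\varepsilon)\lambda$ and an approximate load assignment showing $\lambda^*\le \lambda$.

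\textbf{Step 1 (rough estimate).} Run greedy peeling in $O(n+m)$ time. This gives $\lambda_0$ with $\lambda_0\le\lambda^*\le 2\lambda_0$. The target value then lies in a constant-ratio window, and a grid of $O(1/\varepsilon)$ guesses $\lambda$ covers it. A binary search over these guesses costs only $\log(1/\varepsilon)$ rounds, which explains the $\log\frac1\varepsilon$ factor.

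\textbf{Step 2 (carving lemma).} The structural claim I would aim for is the following. Suppose we compute an assignment via $O(1/\varepsilon)$ rounds of a Frank--Wolfe / greedy++-style load balancing, each round costing $O(n+m)$. Then every vertex whose load stays below $(1-\varepsilon)\lambda$ can be deleted, and this preserves a subgraph of density at least $(1-O(\varepsilon))\lambda^*$.

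To prove it, I would use the standard fact that the vertex set of any densest subgraph has load at least $\lambda^*$ in any feasible assignment. This holds because all edges of $E(S)$ must be charged inside $S$. Consequently, vertices of low load can only belong to near-optimal sets marginally. I would make this precise by a cut/level-set argument on the loads: take the level sets $\{v:\ell(v)\ge t\}$ and show that one of them has density at least $(1-\varepsilon)\lambda$, or else that the removed part carries few edges relative to $\varepsilon\lambda|S|$.

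\textbf{Step 3 (geometric shrinking).} Next I would show that each carving pass removes a constant fraction of the remaining edges, or else certifies optimality. Concretely: if fewer than half of the edges are removed, the surviving high-load region itself has density close to $\lambda$ and we can output it. The costly $1/\varepsilon$-round computations then run on instances of sizes $m, m/2, m/4,\dots$, which sum to $O(m)$ times the per-edge cost. Multiplying by $O(1/\varepsilon^2)$ for the rounds and accuracy and by $\log(1/\varepsilon)$ for the guesses gives $O(\frac{n+m}{\varepsilon^3}\log\frac1\varepsilon)$.

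\textbf{Main obstacle.} I expect the hardest part to be the carving lemma with error that does not accumulate. The number of carving levels is about $\log m$, so losing $\varepsilon$ per level would be too much. I would try to make the loss telescoping or one-shot: always compare against the fixed threshold $\lambda$, never against the shrinking current optimum. Each carving then removes only vertices that are certified not to belong to any set of density at least $(1-\varepsilon)\lambda$, so the guarantee holds with a single $\varepsilon$ regardless of how many levels there are.
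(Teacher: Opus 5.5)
Your outline has the right ingredients in spirit (loads from a fractional orientation, carving, geometric shrinking), but the two steps that carry the proof are not established, and as stated they fail. First, the carving lemma of Step 2 is asserted rather than proved. The reason you give does not support it: $\sum_{v\in S^*}\ell(v)\ge\lambda^*|S^*|$ is an aggregate bound and gives no per-vertex control. Already for the starting assignment $\ell(v)=\deg(v)/2$ on $G=K_{a,b}$ with $a<(1-2\varepsilon)b$, the entire larger side of $S^*$ has load below $(1-\varepsilon)\lambda^*$, and nothing in your sketch says what $O(1/\varepsilon)$ rounds of Frank--Wolfe/greedy++ achieve. Your fix of deleting only vertices certified to lie in no $(1-\varepsilon)\lambda$-dense set asks for a certificate you have no way to produce within the budget.

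The missing idea is the paper's Main Lemma. After $O(\log K/\varepsilon)$ blocking-flow phases, no residual path of hop length at most $T$ joins an under-saturated vertex to an over-saturated one. Inside $S^*$, the hop balls $B_i$ around the under-saturated vertices therefore contain no over-saturated vertex. They grow by a factor $e^{\varepsilon}$ per hop, because $|\nu_{S^*}(B_i)|\ge\lambda^*|B_i|$ while every load in $B_{i+1}$ is at most $(1-\varepsilon)\lambda^*$. Hence fewer than $|S^*|/K$ vertices of $S^*$ are under-saturated. Vertices of degree at least $2\lambda^*$ are never under-saturated, so deleting these costs only a factor $1-2/K$ in density. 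This loss does accumulate over the $\Theta(\log n/\varepsilon)$ levels. The paper makes it summable by raising the accuracy as the graph shrinks, $K_i=4/(\varepsilon^2(1-\varepsilon)^{i-1})$, so that $\prod_i(1-2/K_i)\ge 1-\varepsilon$. The extra cost $\log K_i=O(\log\frac1\varepsilon+i\varepsilon)$ is absorbed by the geometric decay of $n_i+m_i$.

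Second, the dichotomy of Step 3 is false as stated. A load bound on the high-load set $S$ gives only $|\nu(S)|\ge\tau|S|$. This counts boundary edges and says nothing about $|E[S]|$. From it you can conclude only that the whole current graph has density at least $\tau|S|/n_i$, so a `half survives' test certifies only about $\tau/2$. The paper therefore stops only when $|S_i|\ge(1-\varepsilon)^2 n_i$, which gives $m_i\ge\tau|S_i|\ge(1-\varepsilon)^5\lambda^* n_i$. Otherwise it shrinks by just a $(1-\varepsilon)^2$ factor per level. The resulting series $\sum_i(1-\varepsilon)^{2(i-1)}=O(1/\varepsilon)$ is where the third factor of $1/\varepsilon$ in the running time comes from; your accounting with sizes $m,m/2,\dots$ does not produce it. The $\gamma$-core step, via $m_1\ge\lambda^* n_1/8$ and $m_i\le\lambda^* n_i$, is what converts vertex shrinkage into edge shrinkage. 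Finally, the $\log\frac1\varepsilon$ factor comes from $\log K_i$, not from searching for $\tau$. The paper avoids even the search overhead with a doubling scheme over $\varepsilon_j=2^{-(j+2)}$.
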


In particular, for every fixed $\varepsilon>0$, the running time in \cref{thm:undir} is linear in the input size.  To our knowledge, this is the first approximation scheme for DSG with this property.

Our second result concerns the Densest At-Least-$k$ Subgraph Problem (\textsc{DalkSP}).  Here the feasible solutions are required to contain
at least $k$ vertices, and the optimal value is
\[
    \lambda^*_{\geq k}(G)
    :=
    \max_{\substack{S\subseteq V\\ |S|\geq k}}
    \frac{|E[S]|}{|S|}.
\]
Unlike unconstrained DSG, this problem is NP-hard.  Nevertheless, our framework yields an approximation
arbitrarily close to $1/2$ in near-linear time.

\begin{restatable}{theorem}{thmatleast}
    \label{thm:atleast}
    There exists a $(\frac{1}{2}-\eps)$-approximation algorithm for the DalkSP problem running in time $O\left( \frac{(n+m)\log^2 n \log \frac{1}{\varepsilon}}{\varepsilon}  \right)$.
\end{restatable}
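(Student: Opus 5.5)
We approach \cref{thm:atleast} through the classical reduction of Khuller and Saha for the at-least-$k$ variant, with our approximate flow machinery as the subroutine. The exact algorithm of Khuller and Saha repeatedly computes a densest subgraph, adds it to an accumulated set $D$, and continues in the graph where $D$ is contracted (equivalently, it looks for the densest set relative to $D$). It stops once $|D|\ge k$, padding with arbitrary vertices when needed. The key fact is that if $H^*$ is an optimal at-least-$k$ solution and the current $D$ has $|D|<k$, then $H^*\setminus D$ still carries at least half of the ``excess'' density. This is where the $1/2$ comes from. Our plan is to replace each exact densest-subgraph computation by a $(1-\varepsilon)$-approximate one and to show that the errors do not accumulate beyond a $(1-O(\varepsilon))$ factor. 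We will also need to bound the number of rounds.

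The steps, in order, are as follows.

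\textbf{Step 1: relative density oracle.} We generalize the flow-based assignment approach (the fractional orientation / load formulation underlying \cref{thm:undir}) to the relative objective
\[
\max_{S\cap D=\emptyset}\frac{|E(D\cup S)|-|E(D)|}{|S|}.
\]
Edges with exactly one endpoint in $D$ are assigned entirely to their endpoint outside $D$, and edges inside $D$ are discarded. This is again a DSG instance on a graph with loop-like weights, so a Chekuri--Quanrud--Torres style push-relabel computation gives a $(1-\varepsilon)$-approximate relative densest set in time $O((n+m)\log^2 n\cdot\varepsilon^{-1}\log\frac{1}{\varepsilon})$. We use the near-linear variant rather than the carving version, since the target bound already allows $\log^2 n$.

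\textbf{Step 2: one-round guarantee.} We show that if $|D|<k$ and $d(D)<\lambda^*_{\ge k}/2$, then $H^*\setminus D$ has relative density at least $\lambda^*_{\ge k}/2$. The argument counts the edges of $E(H^*)$ that are not inside $D$. There are at least $|E(H^*)|-|E(D)|\ge \lambda^*|H^*|-\tfrac{\lambda^*}{2}|D|$ of them, and we divide by $|H^*\setminus D|\le|H^*|$. Consequently the approximate oracle returns a set $S$ whose marginal density is at least $(1-\varepsilon)\lambda^*/2$. Adding $S$ to $D$ keeps $d(D)\ge(1-\varepsilon)\lambda^*/2$ or increases progress, and a mediant argument shows that the density of the accumulated $D$ stays at least the minimum of the marginal densities.

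\textbf{Step 3: handling overshoot and padding.} If adding $S$ makes $|D|\ge k$, we output either $D$ or the best prefix, exactly as Khuller and Saha do. Their analysis bounds the loss from padding, and with approximate marginals it carries through with the extra factor $(1-\varepsilon)$.

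\textbf{Step 4: limiting the number of rounds.} This is the main obstacle. The number of rounds could be as large as $k$, which would make the total time $kn$. The plan is to run rounds geometrically: in each round we only accept sets when $D$ at least doubles, or when the relative density threshold drops by a $(1+\varepsilon)$ factor. Rounds that make little progress in size are charged to the density threshold, which takes only $O(\varepsilon^{-1}\log n)$ values. We expect to merge the binary-search over the threshold with the oracle, so that each guess is a single flow feasibility test. The extra $\log n$ factor comes from the $O(\log n)$ doublings, which gives the overall $O\big(\frac{(n+m)\log^2 n\log(1/\varepsilon)}{\varepsilon}\big)$ bound. Making this amortization rigorous is where we expect most of the work.
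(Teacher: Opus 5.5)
Your Step 4 is a genuine gap, and it decides whether the running-time claim holds. The Khuller--Saha analysis you reuse in Steps 2--3 needs \emph{every} set added to your accumulated set $D$ to have marginal density at least $(1-\varepsilon)\lambda^*_{\geq k}/2$. Your rule ``accept only when $D$ doubles or the threshold drops by a factor $1+\varepsilon$'' is not defined in terms of anything the Step 1 oracle outputs. That oracle returns one near-densest relative set, which may be tiny compared to $D$. It does not return the union of everything whose marginal density exceeds a given threshold. You also do not say what happens to rejected sets, or why the mediant bound survives if you lower the threshold to collect more. The accounting fails on its own terms as well. Step 1 already spends the entire target budget $O((n+m)\log^2 n\,\varepsilon^{-1}\log\frac1\varepsilon)$ on a single oracle call. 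So $O(\log n)$ doublings, or $O(\varepsilon^{-1}\log n)$ threshold values, overshoot the claimed bound by a $\log n$ or $\varepsilon^{-1}$ factor. A smaller issue: the hypothesis $d(D)<\lambda^*_{\geq k}/2$ of Step 2 already fails after the first round. The first set has density at least $(1-\varepsilon)\lambda^*(G)\geq\lambda^*_{\geq k}/2$, so you have no rule when $|D|<k$ but $d(D)\geq\lambda^*_{\geq k}/2$. The case split that works is on $|E[D]|$ versus $\lambda^*_{\geq k}k/2$.

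The missing idea is that the iteration is unnecessary. Run exact rounds as long as the best marginal density is at least $\tau$, and let $D_\tau$ be the resulting set. Then $d(D_\tau)\geq\tau$ by the mediant argument. Since no set has marginal density $\geq\tau$ relative to $D_\tau$, your Step 2 count applied once gives $|E[D_\tau]|\geq(\lambda^*_{\geq k}-\tau)\,|S^*_{\geq k}(G)|\geq\lambda^*_{\geq k}k/2$ whenever $\tau\leq\lambda^*_{\geq k}/2$. So a single threshold does all the work: output $D_\tau$ if $|D_\tau|\geq k$, and otherwise pad it to $k$ vertices. What remains is a near-linear-time surrogate for $D_\tau$, and that is what the paper's proof builds from the load framework. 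After $O(\varepsilon^{-1}\log m)$ blocking-flow phases at threshold $\tau$, let $B_0$ be the over-saturated vertices. Let $B_i$ be the vertices within $i$ hops of $B_0$ in the reversed residual graph; none of them is under-saturated for $i\leq T$. If $|B_1|<k$, every edge giving positive load to $B_0\cap S^*_{\geq k}(G)$ from outside lies in $E[B_1]$. Hence $|E[B_1]|\geq\sum_{u\in B_0\cap S^*_{\geq k}(G)}\ell_x(u)\geq(\lambda^*_{\geq k}-\tau)k$. If $|B_1|\geq k$ and no $B_i$ were $(1-\varepsilon)\tau$-dense, then $|E[B_{i+1}]|\geq\sum_{u\in B_i}\ell_x(u)\geq\tau|B_i|>|E[B_i]|/(1-\varepsilon)$ would push $|E[B_T]|$ above $m$. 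The running time is then $O(\varepsilon^{-1}\log n)$ blocking flows at $O(m\log n)$ each using dynamic trees. This is multiplied by $O(\log\frac1\varepsilon)$ binary-search guesses for $\tau$, seeded by the linear-time $\frac13$-approximation of Andersen and Chellapilla, and there are no rounds to amortize.
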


Moreover, assuming the Small-Set Expansion Hypothesis, \textsc{DalkSP} is NP-hard to approximate within a factor of $2-\varepsilon$ for any constant  $\varepsilon>0$, making the
$1/2$ approximation threshold essentially
tight~\cite{Manurangsi18}. To the best of our knowledge, this is the first near-linear algorithm for this problem that reaches the approximation upper bound $1/2$. 

\subsection{Preliminaries}
We begin by fixing some notation that will be used throughout the paper.

\paragraph{Notation}  For an undirected graph $G = (V,E)$, by $\delta_G(S)$ we denote the edges of $G$ with exactly one endpoint in $S$, and by $G[S]$ we denote the induced subgraph on the set of vertices $S$. For the sake of brevity when $G$ is clear from the context we use $\delta(S)$ to denote $\delta_G(S)$ and for $S \subseteq T$ define $\delta_T(S)$ to denote $\delta_{G[T]}(S)$. Similarly, $E[S]$ is used to denote the edges with both end points in $S$. We use $\nu(S)$ to denote $\delta(S) \cup E[S]$, i.e. the edges with  at least one endpoint inside $S$. For a vertex $v \in G$, $\mathrm{deg}_G(v) = |\delta_G(v)|$ denotes the degree of $v$ in $G$ and $\mathrm{deg}_S(v)$ denotes $\mathrm{deg}_{G[S]}(v)$. 

When $G$ is a directed graph, $\delta^+(S), \delta^-(S)$ respectively denote the outgoing and incoming edges of the set $S$. For a vertex $v \in G$, $\mathrm{deg}_G^+(v) = |\delta^+_G(v)|$ denotes the outgoing degree and $\mathrm{deg}_G^-(v) = |\delta_G^-(v)|$ denotes the incoming degree of $v$.

Given a graph $G$, $E[G]$ denotes the edges of $G$ and $V[G]$ denotes the vertices of $G$. We use $S^*(G)$ to denote an arbitrary densest subgraph in $G$ and $\lambda^*(G)$ denotes the density of $S^*(G)$. Similarly, $S^*_{\geq k}(G)$ denotes an arbitrary densest at-least-$k$ subgraph in $G$ and $\lambda_{\geq k}^*(G)$ denotes the density of $S^*_{\geq k}(G)$.

We use $\log$ to denote the base $2$ logarithm.

\paragraph{Core Decomposition of a Graph} A critical ingredient of our procedure is the concept of ``$\kappa$-core'' of a graph introduced by Seidman~\cite{seidman1983network}. For a threshold $\kappa\geq 0$, the $\kappa$-core of a graph is the
largest induced subgraph in which every vertex has degree at least
$\kappa$.

\begin{definition}
    The $\kappa$-core of a graph $G=(V,E)$ is the maximal subset $S \subseteq V$ such that the degree of every vertex in $G[S]$ is at least $\kappa$.
\end{definition}

The $\kappa$-core is unique and can be obtained by repeatedly deleting vertices whose current degree is smaller than $\kappa$.  Moreover, the cores for all thresholds can be computed simultaneously in $O(n+m)$ time using a smallest-last peeling ordering
\cite{matula1983smallest,batagelj2003cores}.

\begin{lemma}
    \label{lem:core}
    Given a graph $G = (V,E)$ and a real number $\kappa$, there is an algorithm that finds the $\kappa$-core of $G$ in $O(n+m)$ time. We denote this algorithm by $\textsc{Core}(G, \kappa)$.
\end{lemma}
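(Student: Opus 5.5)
The algorithm is the standard peeling procedure. We maintain the current vertex set $S$, initially $V$, together with the current degrees $\mathrm{deg}_S(v)$. As long as some vertex $v\in S$ satisfies $\mathrm{deg}_S(v)<\kappa$, we delete it from $S$. When no such vertex remains we output $S$. Correctness has two parts. First, the output $S$ has minimum degree at least $\kappa$ in $G[S]$. Second, $S$ is maximal, which we prove by showing that every set $T$ whose induced subgraph has minimum degree at least $\kappa$ stays inside $S$ throughout. Suppose some vertex of $T$ is deleted, and let $v$ be the first such vertex. At the moment of deletion $T\subseteq S$, so $\mathrm{deg}_S(v)\ge \mathrm{deg}_T(v)\ge\kappa$, contradicting the deletion rule. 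Hence every such $T$ lies in the output. This also shows that the $\kappa$-core is unique: the union of two valid sets is again valid, since degrees can only increase when the vertex set grows. So the output is exactly the $\kappa$-core.

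For the running time, first note that degrees are integers, so we may replace $\kappa$ by $\lceil\kappa\rceil$. If $\lceil\kappa\rceil>n$ the answer is empty, and we can detect and return this in $O(1)$ time. Next, compute all degrees in $O(n+m)$ time and put every vertex with degree $<\kappa$ into a queue, marking it as queued. Each time we pop a vertex $v$, we delete it and scan its adjacency list. For each neighbor $u$ still in $S$, we decrement $\mathrm{deg}_S(u)$. If this drops $\mathrm{deg}_S(u)$ below $\kappa$ and $u$ is not yet marked, we enqueue and mark $u$.

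Each vertex is enqueued at most once. Each adjacency list is scanned at most once, namely when its vertex is deleted. So the total work is $O(n+m)$. The output is read off from the unmarked vertices.

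The only subtle point is the order-independence and maximality argument; everything else is routine bookkeeping. A bucket-based smallest-last ordering would compute all cores at once, but that is not needed for a single threshold.
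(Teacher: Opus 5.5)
Your proof is correct and uses the same approach the paper points to: repeatedly delete vertices whose current degree is below $\kappa$. The paper itself gives no argument and only cites the smallest-last peeling order, which computes the cores for all thresholds at once in $O(n+m)$ time; your single-threshold queue implementation and your first-deleted-vertex argument for maximality and uniqueness supply the details the paper takes for granted.
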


We use the following observations during the analysis of our algorithm.

\begin{observation}
    Let $H$ be the $\kappa$-core of a graph $G$, then,
    \begin{align*}
        |E[H]| \geq \frac{\kappa}{2} |V(H)|.
    \end{align*}

    Moreover, if $\kappa \leq \lambda^*(G)$, then $S^*(G) \subseteq V(H)$.
\end{observation}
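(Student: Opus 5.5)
The first claim comes from a double-counting argument on degrees inside the core, and the second from a "peel off a low-degree vertex" exchange argument on a densest set. Throughout, $H$ is identified with its vertex set, and I assume $H\neq\emptyset$; if the core is empty, the edge bound is trivial.

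For the edge bound, every vertex $v\in V(H)$ has $\mathrm{deg}_H(v)\geq \kappa$ by the definition of the $\kappa$-core. By the handshake lemma,
\begin{align*}
2|E[H]| = \sum_{v\in V(H)} \mathrm{deg}_H(v) \geq \kappa |V(H)|,
\end{align*}
and dividing by two gives the inequality.

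For the containment, let $S=S^*(G)$, so $|E[S]|=\lambda^*(G)|S|$. I first show that every $v\in S$ satisfies $\mathrm{deg}_S(v)\geq \lambda^*(G)$. Suppose instead that some $v\in S$ has $\mathrm{deg}_S(v)<\lambda^*(G)$.
\begin{itemize}
\item If $|S|=1$, then $|E[S]|=0$ and $\lambda^*(G)=0$, so $\mathrm{deg}_S(v)<0$, which is impossible.
\item Otherwise $S\setminus\{v\}$ is nonempty and
\begin{align*}
|E[S\setminus\{v\}]| = |E[S]|-\mathrm{deg}_S(v) > \lambda^*(G)|S| - \lambda^*(G) = \lambda^*(G)\,(|S|-1).
\end{align*}
So $S\setminus\{v\}$ has density strictly greater than $\lambda^*(G)$, contradicting optimality.
\end{itemize}
Hence every vertex of $G[S]$ has degree at least $\lambda^*(G)\geq\kappa$.

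Next, $S\cup V(H)$ also has minimum induced degree at least $\kappa$, since degrees only grow when vertices are added. By maximality of the $\kappa$-core, $S\cup V(H)=V(H)$, that is, $S\subseteq V(H)$. Equivalently, in terms of the peeling characterization, no vertex of $S$ is ever the first vertex of $S$ to be deleted, since at that moment its degree is at least $\mathrm{deg}_S(v)\geq\kappa$.

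The only delicate point is the union-closure and maximality step: I must use that the core is the unique maximal set with minimum induced degree at least $\kappa$, and that unions of such sets keep the property. Both hold because induced degree is monotone under adding vertices.
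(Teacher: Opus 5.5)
Your proposal is correct and follows the paper's proof: the handshake lemma gives the edge bound, and for the containment both arguments delete a vertex $v$ of $S^*(G)$ with $\deg_{S^*(G)}(v)<\lambda^*(G)$ to get a strictly denser set, then invoke maximality of the $\kappa$-core. Your explicit handling of the case $|S^*(G)|=1$ and of the union-closure step fills in details the paper leaves implicit under ``uniqueness and maximality.''
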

\begin{proof}
First, using the handshaking lemma we observe,
\begin{align*}
    2 |E[H]| = \sum_{v \in V(H)} \deg_H(v) \geq  \sum_{v \in V(H)} \kappa = \kappa|V(H)| 
\end{align*}
where the second inequality simply uses the definition of a $\kappa$-core. This proves the first claim.

To see the second claim, note that any $v \in S^*(G)$ satisfies $|\delta_{S^*(G)}(v)| \geq \lambda^*(G)$. Otherwise, the graph $S^*(G) \setminus v$ has $|S^*(G)| -1$ vertices and strictly more than $|E[S^*(G)]| - \lambda^*(G) = \lambda^*(G) (|S^*(G)|-1)$ many edges. This implies $S^*(G) \setminus v$ is denser than $\lambda^*(G)$ which contradicts the definition of $\lambda^*(G)$.

Since $\kappa \leq \lambda^*$, every vertex of $S^*(G)$ has a degree greater than or equal to $\kappa$. Therefore, by the uniqueness and maximality of the $\kappa$-core, we obtain $S^*(G) \subseteq H$.
    
\end{proof}

\section{Techniques and New Ideas}

At a high level, our algorithms use flow computations in a different manner from the standard max-flow formulations of densest subgraph. We do not attempt to solve a flow instance to directly find the densest subgraph.  Instead, a small number of blocking-flow phases suffices to produce a fractional assignment of edges to their endpoints. This assignment either certifies that the current graph is approximately dense or identifies a set of ``sparse'' vertices which we can indefinitely delete from our graph without decreasing the maximum density too much. We then spend greater computational effort
only after restricting the problem to a geometrically smaller graph. By iteratively decreasing the size of the graph and increasing the accuracy of our procedure, we ensure a total runtime that is linear in the input size, while also keeping the maximum density approximately the same. Eventually we will be left with an approximately dense subgraph. One can think of our procedure as a way to carve the densest subgraph out of the input graph by iteratively deleting chunks of the graph that are identified to be sparse.

\paragraph{Fractional assignments and vertex loads.}

An assignment is a set of values
$\{x_{e,u},x_{e,v}\}_{e=uv\in E}$ satisfying $ x_{e,u}+x_{e,v}=1$, with $x_{e,u},x_{e,v}\geq 0$. Assuming a single unit of mass on each edge, we interpret $x_{e,u}$ as the fraction of this mass assigned to $u$.
The \emph{load} of a vertex $v$ is then defined as
\[
    \ell_x(v)
    :=
    \sum_{e\in\delta(v)} x_{e,v}.
\]

The assignment viewpoint is motivated by an interesting property of a densest subgraph.  If $S^*$ is a densest subgraph of density $\lambda^*$, then the edges of $G[S^*]$ admit a fractional assignment under which every vertex of $S^*$ receives load $\lambda^*$.  An intuitive way to see why the greedy peeling algorithm of \cite{charikar2000} provides a $\frac{1}{2}$-approximate solution is that it uses degree of a vertex as a measure of its importance. Meanwhile, the degree is not an ideal measure as every edge contributes one unit of degree to both of its endpoints, this double counting causing the coarse $\frac{1}{2}$ approximation ratio. Our algorithm   starts from the symmetric assignment $x_{e,u}=x_{e,v}=1/2$ and uses blocking-flows to move load in a finer way from vertices above a
target threshold to vertices below it in order to make the assignment more balanced.

\paragraph{The Saturation Framework and the Main Lemma}

Given a threshold $\tau \leq (1-\varepsilon) \lambda^*(G)$, we construct a flow network in which the source and sink capacities encode,
respectively, the deficits of vertices with a load below $\tau$, denoted as \emph{under-saturated} vertices, and the excesses of vertices with a load above $\tau$, denoted as \emph{over-saturated} vertices. Sending flow along a residual path in this network accounts for moving mass from an over-saturated vertex to an under-saturated vertex, without changing the loads of internal vertices.  After
$O(\varepsilon^{-1}\log K)$ blocking-flow phases, this framework identifies a set of vertices $S$, denoted as the ``potentially saturated'' vertices, that are incident to many edges while also including most of the densest subgraph inside, formally,
\[
    |\nu_G(S)|\geq\tau|S|,
    \qquad
    |S\cap S^*(G)|
      \geq\left(1-\frac1K\right)|S^*(G)|,
\]
Finally, any vertex in $S^*(G)\setminus S$ has degree less than $2\lambda^*(G)$. Intuitively, this procedure does not fail to identify the ``important'' vertices in the densest subgraph. The running time of this procedure is $O((n+m)\varepsilon^{-2}\log K)$. For more details, see \cref{lem:saturated-discovery}. When $\tau = (1-O(\varepsilon))\lambda^*(G)$, this procedure finds a subset of the graph that contains the majority of the densest subgraph and is incident to $(1-O(\varepsilon))\lambda^*(G) |S|$ many edges. Note that this does not immediately yield a dense subgraph, as the majority of incident edges in $S$ might come from $\delta(S)$.

\paragraph{Iterative carving for DSG.} The potentially saturated vertices $S$ need not induce a dense graph, since
$|\nu_G(S)|$ also counts edges in $\delta(S)$. Therefore, instead of using $S$ to directly find a dense subgraph, we use it to carve out the instance.   At iteration $i$, we run the saturation subroutine on
the current graph $G_i = (V_i, E_i)$ with threshold
$\tau=(1-O(\varepsilon))\lambda^*(G)$ and accuracy
\[
    K_i=
    \Theta\!\left(
      \frac{1}{\varepsilon^2(1-\varepsilon)^{i-1}}
    \right).
\]
If the returned set $S_i$ contains at least a
$(1-\varepsilon)^2$ fraction of $V_i$, then its incident edges certify the entire current graph $G_i$ is already a
$(1-O(\varepsilon))$-approximate densest subgraph.  Otherwise, we continue with
$\operatorname{Core}(G_i[S_i], \gamma)$ for an arbitrary $\gamma \in \left[\frac{\lambda^*(G)}{4}, \frac{\lambda^*(G)}{2}\right]$.

This step has two simultaneous effects:
\begin{align*}
     \lambda^*(G_{i+1})
       \geq\left(1-\frac{2}{K_i}\right)\lambda^*(G_i)
    \quad \text{and} \quad
    |V_{i+1}|\leq(1-\varepsilon)^2|V_i|.
\end{align*}
The first follows from the saturation guarantees as we only lose a small fraction of the densest subgraph by carving out $V_i \setminus S_i$, while the second
follows from the condition to continue.  Since
$\sum_i1/K_i=O(\varepsilon)$ we only lose  $O(\varepsilon)$ fraction of the densest subgraph due to carving throughout the algorithm. Although the number of vertices is decreasing geometrically, the number of edges might not. The core decomposition is what converts vertex shrinkage into
graph size shrinkage. 

As the graph size decreases by a factor of $(1-\varepsilon)^2$ in each iteration, we are geometrically paying a lower computational cost per iteration despite the parameters $K_i$ increasing over time. This is what ultimately balances the runtime of the algorithm and finds an approximate densest subgraph in linear time in input size.

\paragraph{A Near-Linear $\frac{1}{2}-\varepsilon$ Approximation for DalkSP} We adapt our fractional assignments and flow-based framework to the densest at-least-$k$ subgraph problem. Unlike DSG, we cannot iteratively carve the graph to find the densest at-least-$k$ subgraph as carving too much can cause the cardinality to fall under $k$. Instead, we use the assignment framework differently, by setting the threshold to a value $\tau \leq  \frac{1}{2}\lambda^*_{\geq k}(G)$, we ensure enough slack is present that the flow framework can identify the majority of ``important vertices'' of $S^*_{\geq k}(G)$. Indeed, our framework might find less than $k$ vertices but we show these vertices are dense enough so that augmenting arbitrary vertices to them until the cardinality reaches $k$ still keeps a density of at least $(\frac{1}{2}-\varepsilon) \lambda^*_{\geq k}(G)$.
\section{An $O\left(\frac{m+n}{\varepsilon^3} \log \frac{1}{\varepsilon}\right)$ Time $(1-\varepsilon)$ Approximation Algorithm for Densest Subgraph Problem}

In this section, we present the first truly linear approximation scheme for the densest subgraph problem. Our algorithm builds on a simple yet powerful idea. Instead of directly trying to find the densest subgraph, we try to identify vertices not belonging to the densest subgraph. This enables us to sequentially carve out sparse parts of the graph until we are left with a sufficiently dense graph.

More formally, the main ingredient of our algorithm is a flow-based procedure that identifies a subset of vertices that does not contain a large chunk of the densest subgraph. The runtime of this procedure is inversely proportional to the fraction of vertices in the densest subgraph that are wrongfully identified as ``sparse'' vertices.

As we carve more and more from the graph, we are permitted to run this subroutine with increasingly more accuracy as we are dealing with a smaller graph now. To ensure that reducing the number of vertices also reduces the number of edges with the same proportions, we use tools from ``Core Decomposition'' of graphs.

This procedure eventually finds a sufficiently dense subgraph without inducing any logarithmic factors in $n$ in the runtime.

Throughout this section, we assume $E[G] \neq \emptyset$. Otherwise, the maximum density is trivially zero. This ensures $\lambda^*(G) \geq \frac{1}{2}$.

\subsection{Flow-Based Saturation Framework} In this section, we describe a flow-based subroutine, for more information on blocking-flows and their properties please see \cref{app:flow}. Our framework identifies which vertices are ``important'' for the densest subgraph. More formally, we prove the following lemma.
\begin{lemma}[Main Lemma]
    \label{lem:saturated-discovery}
    Given a graph $G$ with minimum degree at least $\frac{\lambda^*(G)}{4}$, a real number $K > 1$, a real number $0 < \varepsilon \leq \frac{1}{2}$, and a parameter $\tau \leq (1-\varepsilon)\lambda^*(G)$  where $L \cdot \tau$ is an integer for some integer $L = \Theta(\frac{1}{\varepsilon})$, there exists an algorithm that runs in time $O\left( \frac{(n+m) \log K}{\varepsilon^2}  \right)$ that returns a set of vertices $S \subseteq V$ called the \textbf{potentially saturated} vertices such that,
    \begin{itemize}
        \item $|\nu(S)| \geq \tau |S|$.
        \item $\left| S \cap S^*(G) \right| \geq |S^*(G)| \left( 1-\frac{1}{K} \right)$. 
        \item  For any vertex $v \in S^*(G) \setminus S$, $\deg_{G}(v) < 2\lambda^*(G)$.
    \end{itemize}
    We denote this algorithm by $\textsc{Saturation}(G, K, \varepsilon, \tau)$.
\end{lemma}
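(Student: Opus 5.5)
We build the flow network suggested in the techniques section and analyze a bounded number of blocking-flow phases. Start from the symmetric assignment $x_{e,u}=x_{e,v}=\tfrac12$ and discretize all masses in units of $1/L$ with $L=\Theta(1/\varepsilon)$. Since $L\tau$ is an integer, loads, capacities and the threshold $\tau$ all live on the grid $\frac1L\mathbb{Z}$. Scaling by $L$ gives an integral network:
\begin{itemize}
\item each edge $e=uv$ is an arc in both directions, with residual capacity $Lx_{e,u}$ from $u$ to $v$ (moving mass of $e$ from $u$ to $v$) and symmetrically;
\item the source $s$ connects to each over-saturated vertex $v$ with capacity $L(\ell_x(v)-\tau)^+$;
\item each under-saturated vertex $v$ connects to the sink $t$ with capacity $L(\tau-\ell_x(v))^+$.
\end{itemize}
We then run Dinic-style blocking-flow phases, each in $O(m+n)$ time or $O(m)$ with unit-ish capacities via dynamic trees avoided. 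Because every path has length at most $O(n)$, I expect the right accounting to be per phase $O(L(n+m))$, since each unit of capacity is a $1/L$ mass quantum. That gives $O(\varepsilon^{-1}(n+m))$ per phase and $O(\varepsilon^{-1}\log K)$ phases, which matches the target $O(\varepsilon^{-2}(n+m)\log K)$.

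After the last phase, let $d(v)$ be the BFS distance from $s$ in the residual graph. Since the phase count is $k$, either $t$ is unreachable or $d(t)>k$. For $j\le k$ let $D_j=\{v: d(v)\le j\}$. Every vertex of $D_j$ other than those adjacent to the sink-capacity boundary has load at least $\tau$; more precisely, any vertex reachable from $s$ that is under-saturated would create a short augmenting path at layer $d(v)+1$. Residual arcs leaving $D_j$ are absent, so every edge between $D_j$ and the outside is fully assigned to the $D_j$ endpoint. Hence $\sum_{v\in D_j}\ell_x(v)=|\nu(D_j)|$, and the closed layered structure gives $|\nu(D_j)|\ge\tau|D_j|$. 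I will take $S=D_{j^*}$ for a well-chosen level $j^*$, namely where consecutive layers grow slowly. The third bullet is handled by adding to $S$ all vertices with $\deg_G(v)\ge 2\lambda^*$ whose load stays at least $\tau$; concretely, I would show such high-degree vertices are never under-saturated, since initially $\ell=\deg/2\ge\lambda^*>\tau$ and loads of over-saturated vertices only decrease down to $\tau$. They are therefore automatically unreachable-to-sink and can be placed in $S$ without breaking the first bullet.

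The key and hardest step is the second bullet: showing $S$ misses at most $|S^*|/K$ vertices of $S^*$. The argument is a layered expansion argument on $S^*$. Within $G[S^*]$ there is an assignment giving every vertex load exactly $\lambda^*\ge\tau/(1-\varepsilon)$. Compare it with the current assignment restricted to $S^*$. Vertices of $S^*$ outside $D_j$ have load at most $\tau$ from their reachable side, so they have an $\varepsilon\lambda^*$ deficit each relative to the ideal assignment. By a flow-decomposition / cut-counting argument, that deficit must be carried by residual arcs crossing from $D_{j}$ to $D_{j+1}\setminus D_j$ inside $S^*$. This forces $|S^*\setminus D_{j+1}|\le(1-c\varepsilon)|S^*\setminus D_j|$, or an equivalent geometric shrinkage, on each layer. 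Iterating over $k=O(\varepsilon^{-1}\log K)$ layers yields a missing fraction of at most $(1-c\varepsilon)^{k}\le 1/K$.

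The minimum-degree hypothesis $\deg\ge\lambda^*/4$ is used to bound discretization losses. Each vertex has $\Omega(\lambda^*)$ edges, so the $1/L$ rounding per vertex is at most an $O(\varepsilon)$ relative error. This is exactly what makes $L=\Theta(1/\varepsilon)$ suffice. Getting the constants in this shrinkage argument right, and making it compatible with the discretization, is where I expect the main difficulty.
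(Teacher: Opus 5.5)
Your overall plan is sound: about $\varepsilon^{-1}\log K$ blocking-flow phases, a set $S$ with no under-saturated vertex, geometric shrinkage of the part of $S^*(G)$ that $S$ misses, and high-degree vertices starting at load $\deg/2\ge\lambda^*>\tau$. But the step that carries the lemma, the shrinkage $|S^*\setminus D_{j+1}|\le(1-c\varepsilon)|S^*\setminus D_j|$, is only asserted, and the mechanism you point to does not give it.

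In your orientation the arc $w\to v$ exists iff $x_{e,w}>0$. So for $C_j:=S^*(G)\setminus D_j$, the deficit $\ell_x^{\mathrm{out}}(C_j,S^*(G))\ge\varepsilon\lambda^*(G)|C_j|$ is mass sitting on the $D_j$-endpoints of the crossing edges. Knowing that all these edges end in $C_j\cap D_{j+1}$ gives no lower bound on $|C_j\cap D_{j+1}|$. A single vertex can be incident to many of them, since degrees inside $S^*(G)$ need not be $O(\lambda^*)$ (e.g.\ an unbalanced complete bipartite graph). That mass also does not count toward its load, so the cap $\tau$ on loads says nothing.

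The fix is to apply the deficit bound one layer further out, to $C_{j+1}$. Take $v\in C_{j+1}$ and $w\in S^*(G)\setminus C_{j+1}$ with $x_{e,w}>0$. The arc $w\to v$ forces $w\notin D_j$, hence $w\in A:=C_j\setminus C_{j+1}$. Since $j\ge 1$, no vertex of $A$ is over-saturated. Combining this with the densest-subgraph property applied to $C_{j+1}$ gives
\[
\varepsilon\lambda^*(G)|C_{j+1}|\le\ell_x^{\mathrm{out}}(C_{j+1},S^*(G))\le\sum_{w\in A}\ell_x(w)\le(1-\varepsilon)\lambda^*(G)|A|,
\]
i.e.\ $|C_{j+1}|\le(1-\varepsilon)|C_j|$.

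The paper avoids this twist. It grows balls, inside the residual graph restricted to $S^*(G)$, from the under-saturated vertices of $S^*(G)$ along arcs pointing toward mass holders. There the crossing mass lands on the newly reached vertices, whose loads are at most $\tau$, so $|B_{i+1}|\ge|B_i|/(1-\varepsilon)$ follows directly. The absence of short augmenting paths is used to keep those balls free of over-saturated vertices. In your version the same property instead keeps $D_j$, for $j\le d(t)-2$, free of under-saturated vertices. Hence you want $j^*$ as large as that allows, not a level ``where layers grow slowly.''

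Two further claims are wrong as written.

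First, residual arcs leaving $D_j$ are not absent for finite $j$; they go to layer $j+1$. Even for a closed set, your orientation would put the crossing mass on the outside endpoint, not the $D_j$ endpoint. You do not need this: $|\nu(S)|\ge\sum_{v\in S}\ell_x(v)\ge\tau|S|$ holds for any $S$ containing no under-saturated vertex.

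Second, the minimum-degree hypothesis has nothing to do with discretization loss. There is none, because $L\tau\in\mathbb{Z}$ makes every capacity an exact multiple of $1/(2L)$. The hypothesis is what justifies your per-phase $O(L(n+m))$ bound. The source/sink arcs have total capacity $\sum_v|\deg(v)/2-\tau|\le m+n\tau$, and $m\ge n\lambda^*(G)/8$ makes this $O(m)$. So the unit-capacity expansion has $O((n+m)/\varepsilon)$ arcs, and one blocking flow costs $O((n+m)/\varepsilon)$. Without that bound your runtime accounting is unjustified.
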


Intuitively, when $\tau = (1-O(\varepsilon))\lambda^*(G)$, if the potentially saturated vertices correspond to a large portion of the graph, we can ensure the current graph is sufficiently dense. Otherwise, we indefinitely delete the vertices outside $S$ from $G$ and work with $G[S]$ from thereafter. By setting $K$ properly, we sufficiently decrease the size of the graph without losing too much of the densest subgraph. To prove \cref{lem:saturated-discovery}, we need to formalize our flow-based framework.

\paragraph{Flow Framework} Assume we are given $\tau$ which we call \textbf{threshold}. Define the capacitated directed graph $H$ obtained from $G$ as follows. First, add two new vertices $s, t$ to $H$. For each edge $e = uv \in G$, add a directed edge from $u$ to $v$ and from $v$ to $u$ in $H$ both with a capacity of $\frac{1}{2}$. Moreover, for each original vertex $u$, if $\mathrm{deg}_G(u) > 2 \tau$, we add an edge directed from $u$ to  $t$ in $H$ with capacity $\frac 1 2\mathrm{deg}_G(u)- \tau $. Similarly, when $\mathrm{deg}_G(u) < 2 \tau$, we add an edge directed from $s$ to $u$ in $H$ with capacity $\tau - \frac{1}{2} \mathrm{deg}_G(u)$. Let $f^{(0)}(e) = 0$ for all edges $e \in H$. We obtain $f^{(i+1)}$ from $f^{(i)}$ by adding an arbitrary blocking flow  to $f^{(i)}$. Let $D^{(i)}$ be the residual graph of $H$ with flow $f^{(i)}$.

To proceed, we need to define an assignment function over the edges.

\begin{definition}
    \label{def:assignment}
    Given a graph $G=(V,E)$, we call $\{x_{e,u}, x_{e,v}\}_{e=uv \in E}$ an \textbf{assignment function} when for all $e=uv \in E$, $x_{e, u} + x_{e, v} = 1$ and $x_{e, u}, x_{e, v} \geq 0$. For a vertex $v \in V$, the \textbf{load} of $v$ with respect to the assignment function $x$ is defined as,
    \begin{align*}
        \ell_x(v) \coloneq \sum_{e = uv \in \delta(v)} x_{e, v}
    \end{align*}
    In other words, the load of $v$ is simply the commutative amount of value assigned to $v$ by the edges adjacent to $v$.

    For subsets $S \subseteq T \subseteq V$, we define the \emph{incoming load} of $S$ from $T$ as,
    \begin{align*}
        \ell^\mathrm{inc}_x(S,T) \coloneq \sum_{u \in S, e = uv \in \delta_T(S)} x_{e, u} 
    \end{align*}
    Similarly, the \emph{outgoing load} of $S$ from $T$ is defined as,
    \begin{align*}
        \ell_x^{\mathrm{out}}(S,T) \coloneq \sum_{u \in S, e = uv \in \delta_T(S)} x_{e, v} 
    \end{align*}

    Clearly, $\ell^\mathrm{inc}_x(S,T) + \ell_x^{\mathrm{out}}(S,T)  = |\delta_T(S)|$. 
\end{definition}

The following observation is immediate about any assignment function by definition.
\begin{observation}
    \label{obs:assignment-bounds}
    Given a graph $G = (V,E)$, $S \subseteq V$, and assignment function $x$.
    \begin{align*}
         |\nu(S)| \geq |\nu(S)| - \ell_{x}^{\mathrm{out}}(S,V)=  \sum_{v \in S} \ell_x(v) = |E[S]| + \ell_{x}^{\mathrm{inc}}(S, V)  \geq |E[S]| 
    \end{align*}
\end{observation}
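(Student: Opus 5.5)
The plan is to prove the middle equality $\sum_{v\in S}\ell_x(v) = |E[S]| + \ell_x^{\mathrm{inc}}(S,V)$ by regrouping the sum of loads edge by edge. The other equality and the two outer inequalities then follow from the identity $\ell^{\mathrm{inc}}_x(S,T)+\ell^{\mathrm{out}}_x(S,T)=|\delta_T(S)|$ in \cref{def:assignment} and from nonnegativity of $x$. Throughout, take $T=V$, so that $\delta_V(S)=\delta(S)$.

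\emph{Regrouping.} Expand $\sum_{v\in S}\ell_x(v)=\sum_{v\in S}\sum_{e\in\delta(v)}x_{e,v}$ as a sum over pairs $(e,v)$ with $v\in S$ and $e$ incident to $v$. Only edges of $\nu(S)=E[S]\cup\delta(S)$ appear, and these two sets are disjoint. Group the terms by edge:
\begin{itemize}
\item An edge $e=uv\in E[S]$ appears twice, once for each endpoint. It contributes $x_{e,u}+x_{e,v}=1$, so $E[S]$ contributes $|E[S]|$ in total.
\item An edge $e=uv\in\delta(S)$ with $u\in S$ appears only once. It contributes $x_{e,u}$, and summing over such edges gives exactly $\ell_x^{\mathrm{inc}}(S,V)$ by definition.
\end{itemize}
This yields $\sum_{v\in S}\ell_x(v)=|E[S]|+\ell_x^{\mathrm{inc}}(S,V)$.

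\emph{First equality.} Since $|\nu(S)|=|E[S]|+|\delta(S)|$ and $|\delta(S)|=\ell^{\mathrm{inc}}_x(S,V)+\ell^{\mathrm{out}}_x(S,V)$, we get
\[
|\nu(S)|-\ell^{\mathrm{out}}_x(S,V)=|E[S]|+\ell^{\mathrm{inc}}_x(S,V).
\]

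\emph{Inequalities.} Both $\ell^{\mathrm{out}}_x(S,V)$ and $\ell^{\mathrm{inc}}_x(S,V)$ are sums of nonnegative values $x_{e,\cdot}$, hence nonnegative. Therefore $|\nu(S)|\ge|\nu(S)|-\ell^{\mathrm{out}}_x(S,V)$ and $|E[S]|+\ell^{\mathrm{inc}}_x(S,V)\ge|E[S]|$.

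There is no real obstacle. The only point needing care is the bookkeeping in the regrouping: each internal edge must be counted through both of its endpoints, and each boundary edge through its single endpoint in $S$.
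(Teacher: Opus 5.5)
Your proof is correct and is exactly the definitional bookkeeping the paper has in mind. The paper gives no written proof and just calls the observation ``immediate by definition.'' Your edge-by-edge regrouping, together with $\ell^{\mathrm{inc}}_x+\ell^{\mathrm{out}}_x=|\delta(S)|$ and nonnegativity of $x$, spells that out.
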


The residual graph induces a fractional assignment of edges in $G$ to their endpoints. 
\begin{definition}
    Given a residual graph $D$ with respect to a flow $f$ of the graph $H$ defined above, the edge $(u,v)$ has capacity
    \begin{equation*}
        x_{e,v} = \frac{1}{2}- f(u,v) + f(v,u)
    \end{equation*}
    and the edge $(v,u)$ has capacity 
    \begin{equation*}
        x_{e,u} = \frac{1}{2}- f(v,u) + f(u,v).
    \end{equation*}
    Clearly $x_{e,u} + x_{e,v}=1$ and $x_{e,u}, x_{e,v} \geq 0$. One can therefore, think of $x_{e,v}$ as the amount edge $e$ contributes to its endpoint $v$. 
\end{definition}

We denote the assignment function given by $D^{(i)}$ as $x^{(i)}$ and the load with respect to $x^{(i)}$ as $\ell^{(i)}$. Clearly, $x^{(0)}_{e,u} = \frac{1}{2}$ for all $u \in V$ and $e \in \delta(u)$.

\begin{definition}
Given a graph $G = (V,E)$ and an assignment  function $x$ on $G$, let $\ell$ be the load with respect to $x$, then, a vertex $u \in G$ is called \textbf{over-saturated} when $\ell(u) > \tau$. Similarly, $u$ is called \textbf{under-saturated} when $\ell(u) < \tau$ and \textbf{saturated} when $\ell(u) = \tau$. 
\end{definition}

Recall the directed graph $H$ and let $x^{(0)}$ be the assignment given by $D^{(0)}$. The source $s$ has a directed edge to every under-saturated vertex $u$ and the capacity of this edge is exactly the difference between the threshold $\tau$ and load of $u$ with respect to $\ell^{(0)}$. Similarly, any over-saturated $u$ has an edge to the sink $t$ and the capacity of this edge is exactly the difference between the load of $u$ with respect to $\ell^{(0)}$ with the threshold $\tau$. 

We say a vertex $v$ is (over/under) saturated at iteration $i$ of the flow framework if it is (over/under) saturated with respect to $x^{(i)}$. The following claim is easy to see from our framework, because 
we cannot send more through a vertex $u$ than the capacity of the edge $su$, (or the capacity of $ut$ if $u$ is over-saturated.) 

\begin{claim}
    \label{obs:saturation-consistancy}
    Consider iteration $i+1$ of the flow framework. Any over-saturated vertex at iteration $i+1$ is over-saturated at iteration $i$. Similarly, any under-saturated vertex at iteration $i+1$ is under-saturated at iteration $i$. In other words, during the flow framework, an over-saturated vertex will not become under-saturated or vice versa. 

    Moreover, a saturated vertex will always remain saturated in future iterations of the flow framework.
\end{claim}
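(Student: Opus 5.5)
The plan is to express the load of every original vertex $u$ under $x^{(i)}$ in terms of the net flow through $u$. Then I will use flow conservation at $u$ together with the capacity of its unique terminal edge.

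\textbf{Load identity.} Fix iteration $i$ and an original vertex $u$. Summing the definition of $x^{(i)}$ over $e=uv\in\delta(u)$ gives
\begin{align*}
\ell^{(i)}(u) = \tfrac12\deg_G(u) + \sum_{v} f^{(i)}(v,u) - \sum_{v} f^{(i)}(u,v),
\end{align*}
so the load equals $\ell^{(0)}(u)$ plus the net flow entering $u$ along original edges. By construction of $H$, $u$ has at most one terminal edge: $su$ when $\ell^{(0)}(u)<\tau$, $ut$ when $\ell^{(0)}(u)>\tau$, and none when $\ell^{(0)}(u)=\tau$. Flow conservation at $u$ in $f^{(i)}$, which is a feasible flow because a sum of successive blocking flows in residual graphs is a flow in $H$, turns the identity into the following three cases.
\begin{itemize}
\item If $u$ is initially under-saturated, then $\ell^{(i)}(u)=\ell^{(0)}(u)+f^{(i)}(s,u)$.
\item If $u$ is initially over-saturated, then $\ell^{(i)}(u)=\ell^{(0)}(u)-f^{(i)}(u,t)$.
\item If $u$ is initially saturated, then $\ell^{(i)}(u)=\ell^{(0)}(u)=\tau$.
\end{itemize}

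\textbf{Monotonicity.} Next I show that the terminal edge flows $f^{(i)}(s,u)$ and $f^{(i)}(u,t)$ are nondecreasing in $i$ and bounded by the capacities $\tau-\ell^{(0)}(u)$ and $\ell^{(0)}(u)-\tau$ respectively. The bound is plain feasibility. For monotonicity, a blocking flow in $D^{(i)}$ consists of $s$–$t$ paths in the residual graph, and these paths could a priori use the reverse arc $u\to s$ or $t\to u$. I will argue that this never happens. A simple $s$–$t$ path never re-enters $s$, and it never leaves $t$ before ending, so it cannot traverse $u\to s$ or $t\to u$. The augmentation therefore only increases flow on source and sink edges. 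This is the step I expect to need the most care: it depends on the blocking flow being decomposable into simple $s$–$t$ paths with no cycles through terminals. I will assume this standard fact from \cref{app:flow}; any cycle part can also be discarded without changing terminal flows.

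\textbf{Conclusion.} For an under-saturated $u$ at iteration $i+1$, it must be initially under-saturated, since the other two cases keep the load at or above $\tau$. Monotonicity then gives $\tau>\ell^{(i+1)}(u)\ge\ell^{(i)}(u)$, so $u$ was under-saturated at iteration $i$. The over-saturated case is symmetric. For saturation persistence, suppose $\ell^{(i)}(u)=\tau$. If $u$ was initially saturated, its load stays $\tau$ forever. Otherwise its terminal edge is saturated at full capacity, and by the monotonicity and capacity bounds it stays at full capacity, so the load remains exactly $\tau$ in all later iterations.
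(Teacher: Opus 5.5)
Your proof is correct and follows the same route as the paper: write the load via flow conservation as $\frac12\deg(u)+f(s,u)$ (resp.\ $\frac12\deg(u)-f(u,t)$), bound it by the terminal-edge capacity, and use that blocking flows never traverse the residual arcs $u\to s$ or $t\to u$, so terminal flows only grow and a saturated vertex stays saturated. You also spell out the resulting monotonicity, and the case of initially saturated vertices, more explicitly than the paper does. One local fix: under the paper's convention $x_{e,u}=\frac12-f(v,u)+f(u,v)$ your displayed identity has the wrong sign---the load is $\frac12\deg_G(u)$ plus the net flow \emph{leaving} $u$ along original arcs, and only with that sign does conservation give $\ell^{(i)}(u)=\ell^{(0)}(u)+f^{(i)}(s,u)$ as you state (with your sign you would get $\ell^{(0)}(u)-f^{(i)}(s,u)$).
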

\begin{proof}
    Without loss of generality we show the claim for a vertex $u$ that was initially under-saturated. Using flow conservation and the definition of assignment $x$ induced by a flow $f$,
    \begin{align*}
        \ell_x(u) = \sum_{e \in \delta(u)} x_{e,u} = \sum_{v\in N_G(u)} \left(\frac 1 2 - f(v,u) + f(u,v) \right) = \frac{\deg(u)}{2} + f(s,u) \leq \tau
    \end{align*}
    where the last inequality follows as the capacity of $(s,u)$ on $H$ is $\tau - \frac{\deg(u)}{2}$. Therefore, an under-saturated vertex can never become over-saturated. Similarly, an over-saturated vertex cannot become under-saturated.

    Moreover, no shortest $s,t$ flow path will use an edge of the form $(u, s)$ in the residual graph as that creates a cycle which contradicts the path having the shortest length. Therefore, once a vertex becomes saturated, it remains saturated in our framework.
\end{proof}

By definition of $H$, for any vertex $v \in G$, $\ell^{(0)}(v) = \frac{\deg(v)}{2}$. Therefore, any vertex $v$ with degree  $\deg(v) \geq 2\tau$ will never become under-saturated in any iteration of the framework by \cref{obs:saturation-consistancy}.

We are finally ready to carry out the proof of \cref{lem:saturated-discovery}.

\begin{proof}[Proof of \cref{lem:saturated-discovery}]
    Using the flow framework, create a directed graph $H$ with the threshold $\tau$. 

    Now, add blocking flows to $H$ until the shortest $s,t$ flow-path has hop distance at least $T+3$ where $T =\left\lceil \frac{\log K}{\varepsilon} \right\rceil + 1$. This requires at most $\lceil \frac {\log K}{\varepsilon}\rceil + 4= O\left( \frac {\log K}{\varepsilon}\right)$ many times of adding a blocking flow. Now, let $f$ be the flow assignment at this point and $D$ the residual graph of $f$. Finally, let $x$ be the assignment induced by $f$. Since at any moment $s$ is only connected to under-saturated vertices and $t$ to over-saturated vertices, there is no path of hop distance at most $T$ between an under-saturated and an over-saturated vertex with respect to $x$.

    Now, let $S$ be the set of all vertices that are saturated or over-saturated with respect to $x$. 
    First, we bound $|\nu(S)|$.
    \begin{align*}
        |\nu(S)| &\geq \sum_{u \in S} \ell_x(u) && \rhd \cref{obs:assignment-bounds}\\
        &\geq \sum_{u \in S}  \tau && \rhd \text{Definition of Saturation} \\
        &= \tau |S|
    \end{align*}
    Now, we show $S$ contains at least a $(1-\frac{1}{K})$ fraction of the vertices in $S^*(G)$. Let $B_0$ be all the under-saturated vertices in $S^*(G)$, i.e. $B_0 = S^*(G) \setminus S$. We prove the bound using contradiction, assume $|B_0| \geq \frac{1}{K} |S^*(G)|$.
    
    Let $D^* = D[S^*(G)]$ be the induced subgraph of $D$ on $S^*(G)$, define $B_i$ as all vertices in $D^*$ within hop distance at most $i$ of $B_0$. Since there is no path of hop-distance at most $T$ between an under-saturated and an over-saturated vertex in $D$, $B_{i}$ does not contain any over-saturated vertex for all $0 \leq i \leq  T$. The following claim shows that the volume of the sets $B_i$ should increase over time.

    \begin{claim}
        \label{clm:increase-size}
        For any $T \geq i \geq 0$,
        \begin{align*}
            |E[B_{i+1}]| \geq |E[B_{i}]| + \ell_{x}^{\mathrm{out}}(B_i, S^*(G))  \geq  |E[B_{i}]|   + \varepsilon \lambda^*(G) |B_i|.
        \end{align*}
    \end{claim}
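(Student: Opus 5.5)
The plan is to prove the two inequalities separately. The first comes from the residual-graph structure of $D$. The second comes from the optimality of $S^*(G)$ combined with the fact that $B_i$ contains no over-saturated vertex. Throughout, I read ``within hop distance $i$ of $B_0$'' as reachability by a directed path of at most $i$ residual arcs of $D^*$ starting in $B_0$.

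\textbf{First inequality.} Since $B_i \subseteq B_{i+1}$, we have $E[B_i] \subseteq E[B_{i+1}]$. It therefore suffices to exhibit at least $\ell_x^{\mathrm{out}}(B_i,S^*(G))$ edges of $E[B_{i+1}]\setminus E[B_i]$. Consider the edges $e=uv \in \delta_{S^*(G)}(B_i)$ with $u\in B_i$, $v \in S^*(G)\setminus B_i$ and $x_{e,v}>0$. By the definition of the induced assignment, $x_{e,v}$ is exactly the residual capacity of the arc $(u,v)$ in $D$. So $x_{e,v}>0$ means $(u,v)$ is an arc of $D^*=D[S^*(G)]$, and hence $v\in B_{i+1}$. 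Every such $e$ therefore lies in $E[B_{i+1}]\setminus E[B_i]$. Since each contributes $x_{e,v}\le 1$ to $\ell^{\mathrm{out}}_x(B_i,S^*(G))$, and edges with $x_{e,v}=0$ contribute nothing, the number of these edges is at least $\ell^{\mathrm{out}}_x(B_i,S^*(G))$.

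\textbf{Second inequality.} First I would use the density property of $S^*(G)$. For any $B\subseteq S^*(G)$, the set $S^*(G)\setminus B$ has density at most $\lambda^*(G)$, so $|E[S^*(G)\setminus B]|\le \lambda^*(G)(|S^*(G)|-|B|)$. Subtracting from $|E[S^*(G)]|=\lambda^*(G)|S^*(G)|$ gives
\begin{align*}
|E[B]| + |\delta_{S^*(G)}(B)| \geq \lambda^*(G)|B|.
\end{align*}
Next I would use the loads. For $0\le i\le T$, $B_i$ contains no over-saturated vertex, so $\ell_x(u)\le\tau\le(1-\varepsilon)\lambda^*(G)$ for every $u\in B_i$. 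By \cref{obs:assignment-bounds}, $\sum_{u\in B_i}\ell_x(u) = |E[B_i]|+\ell_x^{\mathrm{inc}}(B_i,V)$. Moreover $\ell_x^{\mathrm{inc}}(B_i,V)\ge \ell_x^{\mathrm{inc}}(B_i,S^*(G))$, since the latter sums a subset of the same nonnegative terms. Hence
\begin{align*}
|E[B_i]|+\ell_x^{\mathrm{inc}}(B_i,S^*(G)) \le (1-\varepsilon)\lambda^*(G)|B_i|.
\end{align*}
Finally, $\ell_x^{\mathrm{out}}(B_i,S^*(G)) = |\delta_{S^*(G)}(B_i)| - \ell_x^{\mathrm{inc}}(B_i,S^*(G))$. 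Combining this identity with the two displays gives
\begin{align*}
\ell_x^{\mathrm{out}}(B_i,S^*(G)) \geq \lambda^*(G)|B_i| - (1-\varepsilon)\lambda^*(G)|B_i| = \varepsilon\lambda^*(G)|B_i|,
\end{align*}
which is the second inequality.

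\textbf{Main obstacle.} The only delicate point is the orientation convention in the first step. One must check that $x_{e,v}$ is the residual capacity of the arc from $u$ to $v$, not from $v$ to $u$, so that it points outward from $B_i$ along the direction in which the $B_i$ are grown. The definition of the induced assignment gives exactly this: the residual capacity of $(u,v)$ is $\tfrac12 - f(u,v)+f(v,u)=x_{e,v}$.
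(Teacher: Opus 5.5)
Your proof is correct and follows the paper's argument essentially step for step. You combine the optimality bound $|E[B_i]|+|\delta_{S^*(G)}(B_i)|\ge\lambda^*(G)|B_i|$ with the load bound on non-over-saturated vertices for the second inequality, and you identify $x_{e,v}$ with the residual capacity of the arc $(u,v)$ for the first. The only cosmetic difference is that your multiplicative form of the density inequality holds trivially when $B_i=S^*(G)$, so you do not need the paper's separate check that $B_i$ is a strict subset of $S^*(G)$.
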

    \begin{proof}
        Consider some $T \geq i \geq 0$. By our assumptions, $|B_{i}| \geq |B_0|$. Since no vertex in $B_i$ is over-saturated, by \cref{obs:assignment-bounds} we have
        \begin{equation}
        \label{eq:edge-bound}
            \begin{aligned}
            |E[B_i]| + \ell_x^{\mathrm{inc}}(B_i, S^*(G)) &\leq |E[B_i]| + \ell_x^{\mathrm{inc}}(B_i, V) \\ &= \sum_{u \in B_i} \ell_x(u) \\ &\leq \tau \cdot |B_i|\\ &\leq (1-\varepsilon) \lambda^*(G) |B_i|. 
        \end{aligned}
        \end{equation}
        Moreover, this shows $B_i$ is a strict subset of $S^*(G)$ as if $B_i = S^*(G)$, then, $$|E[B_i]| = \lambda^*(G) |B_i| >  (1-\varepsilon) \lambda^*(G) |B_i|.$$

        Meanwhile, since $S^*(G)$ is a densest subgraph of $G$, the density of $S^*(G) \setminus B_i$ is less than or equal to the density of $S^*(G)$. Therefore,
        \begin{align*}
            \lambda^*(G) \geq \frac{|E[S^*(G) \setminus B_i]|}{|S^*(G) \setminus B_i|} = \frac{|E[S^*(G)]| - |\nu_{S^*(G)} (B_i)|}{|S^*(G)| - |B_i|}  && \rhd B_i \subseteq S^*(G)
        \end{align*}
        
        This implies $\frac{|\nu_{S^*(G)} (B_i)|}{|B_i|} \geq \lambda^*(G)$. Equivalently, $|\nu_{S^*(G)} (B_i)| \geq \lambda^*(G) |B_i|$. 
        
    Combining this with \eqref{eq:edge-bound} yields,
    \begin{align*}
       \ell_x^{\mathrm{out}}(B_i, S^*(G)) &= |\delta_{S^*(G)}(B_i)| - \ell_x^{\mathrm{inc}}(B_i, S^*(G)) \\ &=
       |\nu_{S^*(G)} (B_i)| -  |E[B_i]| - \ell_x^{\mathrm{inc}}(B_i, S^*(G)) \\ &\geq \varepsilon \lambda^*(G) |B_i|,
    \end{align*}
    where the second line substitutes $|\nu_{S^*(G)}(B_i)|$ with $|\delta_{S^*(G)}(B_i)| + |E[B_i]|$. 

    All edges $e = uv$ in $S^*(G)$ with exactly one endpoint $ u \in B_i$ and $v \notin B_i$ with a non-zero value of $x_{e,v}$ by definition satisfy $f(u,v) - f(v,u) < \frac{1}{2}$. Meanwhile, this implies there is an outgoing arc in $D^*$ from $u$ to $v$. Namely, this implies $|\delta_{D^*}^{+}(B_i)| \geq \ell_{x}^{\mathrm{out}}(B_i, S^*(G))$. Therefore,
    \begin{align*}
         |E[B_{i+1}]| \geq |E[B_{i}]|  +  |\delta_{D^*}^{+}(B_i)|   &\geq |E[B_{i}]|  + \ell_{x}^{\mathrm{out}}(B_i, S^*(G)) \\
         &\geq |E[B_{i}]| + \varepsilon \lambda^*(G) |B_i|
    \end{align*}
    which concludes the proof of this claim.
    \end{proof}
    We apply \cref{clm:increase-size} to show the size of $B_i$ will exponentially increase over time.
    \begin{claim}
        \label{clm:increase-size-vertices}
        For any $T-1 \geq i \geq 0$,
        \begin{align*}
            |B_{i+1}| \geq e^{\varepsilon}|B_i|
        \end{align*}
    \end{claim}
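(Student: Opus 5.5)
The plan is to bound the outgoing load of $B_i$ from above by the total load that can sit on the newly reached vertices $B_{i+1}\setminus B_i$. Combined with the lower bound on this outgoing load from the proof of \cref{clm:increase-size}, this gives $|B_{i+1}\setminus B_i| \geq \frac{\varepsilon}{1-\varepsilon}|B_i|$. Hence $|B_{i+1}| \geq \frac{1}{1-\varepsilon}|B_i| \geq e^{\varepsilon}|B_i|$, where the last step uses $1-\varepsilon \leq e^{-\varepsilon}$.

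\textbf{Lower bound.} The proof of \cref{clm:increase-size} already shows, for every $0\le i\le T$, that
\begin{align*}
\ell_x^{\mathrm{out}}(B_i,S^*(G)) \;\geq\; |\nu_{S^*(G)}(B_i)| - \tau|B_i| \;\geq\; \varepsilon\lambda^*(G)|B_i|.
\end{align*}
This is obtained by combining $|\nu_{S^*(G)}(B_i)|\ge \lambda^*(G)|B_i|$ with \eqref{eq:edge-bound}. I will reuse this inequality directly.

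\textbf{Upper bound.} Each unit of $\ell_x^{\mathrm{out}}(B_i,S^*(G))$ is a quantity $x_{e,v}>0$ for an edge $e=uv$ with $u\in B_i$ and $v\in S^*(G)\setminus B_i$. As observed in that proof, $x_{e,v}>0$ means the arc $(u,v)$ is present in $D^*$, so $v\in B_{i+1}\setminus B_i$. Therefore all of this outgoing load is credited to vertices of $B_{i+1}\setminus B_i$, and it is only part of their total load. This gives
\begin{align*}
\ell_x^{\mathrm{out}}(B_i,S^*(G)) \;\leq\; \sum_{v\in B_{i+1}\setminus B_i}\ell_x(v).
\end{align*}
Since $i\le T-1$, we have $i+1\le T$, so $B_{i+1}$ contains no over-saturated vertex: there is no residual path of hop length at most $T$ from an under-saturated vertex to an over-saturated one. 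Hence every $v\in B_{i+1}\setminus B_i$ has $\ell_x(v)\le\tau\le(1-\varepsilon)\lambda^*(G)$, and the right-hand side is at most $(1-\varepsilon)\lambda^*(G)\,|B_{i+1}\setminus B_i|$.

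\textbf{Combining.} Putting the two bounds together and dividing by $\lambda^*(G)>0$ gives $|B_{i+1}\setminus B_i|\geq \frac{\varepsilon}{1-\varepsilon}|B_i|$, and the conclusion follows as stated at the start.

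The step needing the most care is the upper bound. One must check that the outgoing load lands only on vertices inside $B_{i+1}$, i.e.\ that positive $x_{e,v}$ really corresponds to a residual arc within $D^*=D[S^*(G)]$. One must also check that the index range $i\le T-1$ is exactly what guarantees these vertices are not over-saturated.
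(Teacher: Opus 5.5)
Your proposal is correct and follows essentially the same route as the paper. Both arguments take the lower bound $\ell_x^{\mathrm{out}}(B_i,S^*(G)) \geq \varepsilon\lambda^*(G)|B_i|$ from \cref{clm:increase-size}, bound it above by the total load on $B_{i+1}\setminus B_i$ (a positive $x_{e,v}$ gives the residual arc $(u,v)$, and $i+1\le T$ excludes over-saturated vertices), and finish with $\frac{1}{1-\varepsilon}\ge e^{\varepsilon}$.
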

    \begin{proof}
        Since $i \leq T-1$, $B_{i+1}$ does not contain any over-saturated vertex. Meanwhile, \cref{clm:increase-size} implies
         \begin{align*}
            \ell_{x}^{\mathrm{out}}(B_i, S^*(G)) 
         &\geq  \varepsilon \lambda^*(G) |B_i|
        \end{align*}
        Consider the edge $e =uv$ where $u \in B_i$ and $v \notin B_i$, $e$ contributes $x_{e,v}>0$ to the load of $v$ as the directed edge $(u,v)$ is in $D^*$. Moreover, $v \in B_{i+1} \setminus B_i$ as there is a directed edge from $u$ to $v$ in $D^*$, this implies,
        \begin{align}
            \label{eq:marg-lowerbound}
            \sum_{u \in B_{i+1} \setminus B_i} \ell_x(u) \geq \ell_{x}^{\mathrm{out}}(B_i, S^*(G)) 
        \end{align}
        
        Meanwhile, since no vertex in $B_{i+1}$ is over-saturated, 
        \begin{align}
            \label{eq:marg-upperbound}
            \sum_{u \in B_{i+1} \setminus B_i} \ell_x(u) \leq  |B_{i+1} \setminus B_i| \tau \leq (1-\varepsilon)\lambda^*(G) |B_{i+1} \setminus B_i|. 
        \end{align}
        Combining \eqref{eq:marg-lowerbound} and \eqref{eq:marg-upperbound} yields,
        \begin{align*}
            \varepsilon \lambda^*(G) |B_i|\leq (1-\varepsilon)\lambda^*(G) |B_{i+1} \setminus B_i|.
        \end{align*}
        Finally, it follows that,
        \begin{align*}
           |B_{i+1}| = |B_{i+1} \setminus B_i| + |B_i| \geq \frac{|B_i|}{1-\varepsilon} \geq e^\varepsilon |B_i|.
        \end{align*}
        where the last inequality uses $\frac{1}{1-x} \geq e^x \:, \forall x\in[0,1)$. This concludes the proof.
    \end{proof}
    
    Returning to the proof of \cref{lem:saturated-discovery}, note that iteratively applying \cref{clm:increase-size-vertices} yields,
    \begin{align*}
        \left|B_{T}\right| \geq e^{T\varepsilon}|B_0| > K |B_0| && \rhd T = \left\lceil \frac{\log K}{\varepsilon} \right\rceil + 1.
    \end{align*}
    Recall that we are assuming $|B_0| \geq \frac{|S^*(G)|}{K}$. This  implies $\left|B_{T}\right| > |S^*(G)|$.
    Since $B_T \subseteq S^*(G)$ it follows that $|B_T| \leq |S^*(G)|$, which yields a contradiction.
    It follows that $|B_0| < \frac{1}{K} |S^*(G)|$ or equivalently, $\left| S \cap S^*(G) \right| \geq |S^*(G)| \left( 1-\frac{1}{K} \right)$.

    Finally, by \cref{obs:saturation-consistancy}, a vertex that is not under-saturated will never become under-saturated during the algorithm. By our initial assignment, any vertex $v$ with $\deg(v) \geq 2\lambda^*$ has load $\frac{\deg(v)}{2} \geq \lambda^* \geq \tau$ in the beginning of the algorithm. Thus, such a vertex will never become under-saturated. This implies any vertex $v \in S^*(G) \setminus S$ satisfies $\deg(v) < 2\lambda^*(G)$.

    \paragraph{Runtime of the Procedure.} To find the set $S$, we simply need to compute at most $T+2 = O\left(  \frac{\log K}{\varepsilon}\right)$ many blocking flows. Recall we are assuming $L \tau$ is an integer for some $L = \Theta\left(\frac{1}{\varepsilon} \right)$. This implies the capacity on every edge of the directed graph $H$ is divisible by $\frac{1}{2L} = \Theta(\varepsilon)$. Moreover, it is easy to see that the summation of the capacity of the edges is at most $2m+n\lambda^*(G)$. Since $G$ is a $\frac{\lambda^*(G)}{4}$-core, $m \geq \frac{\lambda^*(G) n}{8}$. Thus, the summation of the capacities in the network $H$ is $O(n+m)$. Consequently, using a depth-first search, we can find a blocking flow for $H$ in time $O\left( \frac{m+n}{\varepsilon} \right)$. Since we need to find $O\left(  \frac{\log K}{\varepsilon}\right)$ many blocking flows, the algorithm takes $O\left(  \frac{(m+n)\log K}{\varepsilon^2}\right)$ time.
\end{proof}

\subsection{Carving the Densest Subgraph: Combining the Flow Framework and Graph Cores} This section presents our main algorithm, a linear time approximation scheme for the densest subgraph problem.

\thmMainUndir*

\paragraph{Description of the Algorithm} The algorithm runs in multiple iterations, each iteration deals with a geometrically smaller graph. At iteration $i$, the algorithm deals with a graph $G_i = (V_i, E_i)$ which is a subgraph of the original graph $G$. Let $n_i, m_i$ respectively denote the number of vertices and edges in $G_i$. We set the threshold parameter $\tau$ to some value in $[(1-\varepsilon)^3 \lambda^*(G), (1-\varepsilon)^2\lambda^*(G)]$ such that there exists some integer $L = \Theta\left( \frac{1}{\varepsilon} \right)$ such that $L\tau$ is an integer.

\begin{itemize}
    \item \textbf{Initialization:} Let $\gamma \in [\lambda^*(G)/4, \lambda^*(G)/2]$. Such $\gamma$ can be obtained in linear time by running the $2$-approximation algorithm  of \cite{charikar2000} and halving the density of the densest subgraph returned by that algorithm. Let $G_1 = \textsc{Core}(G, \gamma)$.
    \item \textbf{Iteration $i$:} At iteration $i$, the algorithm runs $\textsc{Saturation}(G_i, K_i, \varepsilon, \tau)$ where $K_i = \frac{4}{\varepsilon^2(1-\varepsilon)^{i-1}}$ to obtain a set $S_i$. Depending on the cardinality of $S_i$, the algorithm acts differently. In every run of the $\textsc{Saturation}$ subroutine, we use the same threshold parameter $\tau$, later we explain how such $\tau$ can be found.
    \begin{itemize}
        \item \textbf{If $|S_i| \geq (1-\varepsilon)^2 n_i$:} The algorithm returns $V_i$ as an approximate densest subgraph of $G$.
        \item \textbf{If $|S_i| < (1-\varepsilon)^2 n_i$:} The algorithm sets $G_{i+1} = \textsc{Core}(G_i[S_i], \gamma)$ and proceeds to the next iteration.
    \end{itemize}

\end{itemize}

\begin{algorithm}[t]
\caption{A linear approximation scheme for the densest subgraph problem.}
\label{alg:linear}
\begin{algorithmic}[1]
\State {\bfseries Input:} $G=(V,E), \tau, \varepsilon$
    \State Obtain $\gamma$ by running Charikar's 2-approximation algorithm \cite{charikar2000} and halving the output
    \State $G_1 = (V_1, E_1) \gets \textsc{Core}(G, \gamma)$
    \State $n_1 \gets |V_1|, m_1 \gets |E_1|$
    \State $i \gets 1$
    \While{{True}}
        \State $K_i \gets \frac{4}{\varepsilon^2 (1-\varepsilon)^{i-1}}$
        \State $S_i \gets \textsc{Saturation}(G_i, K_i, \varepsilon, \tau)$
        \If{$|S_i| \geq (1-\varepsilon)^2 n_i$}
            \State \Return $V_i$
        \Else
            \State $G_{i+1} = (V_{i+1}, E_{i+1}) \gets \textsc{Core}(G_i[S_i], \gamma)$
            \State  $n_{i+1} \gets |V_{i+1}|, m_{i+1} \gets |E_{i+1}|$
            \State $i \gets i+1$
        \EndIf
    \EndWhile
\end{algorithmic}
\end{algorithm}

We start by showing several invariants of our algorithm.

\subsubsection{Structural Invariants}
We start by investigating $\lambda^*(G_i)$ for the graph $G_i$ considered in iteration $i$ of our algorithm.  In particular,  \cref{lem:densitybound-iteration} and \cref{clm:size-bound} below together show that in each iteration the size of the instance $G_i$ decreases while the maximum density approximately stays the same.

The following lemma establishes a lower-bound on $\lambda^*(G_i)$.

\begin{lemma}
\label{lem:densitybound-iteration}
    If \cref{alg:linear} runs for at least $i$ iterations, then, 
    \begin{align*}
        \lambda^*(G_i) \geq \left(\prod_{j=1}^{i-1} \left( 1- \frac{2}{K_j} \right)\right) \lambda^*(G) \geq (1-\varepsilon) \lambda^*(G)
    \end{align*}
    Moreover, this implies the threshold $\tau$ satisfies,
    \begin{align*}
        \tau \leq (1-\varepsilon) \lambda^*(G_i)
    \end{align*}
\end{lemma}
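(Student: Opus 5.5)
We argue by induction on $i$, proving the slightly stronger statement that $\lambda^*(G_i) \geq \prod_{j<i}(1-2/K_j)\,\lambda^*(G)$ and, simultaneously, that $\tau \leq (1-\varepsilon)\lambda^*(G_i)$ and that $G_i$ has minimum degree at least $\gamma \geq \lambda^*(G_i)/4$. The last two facts are what license the call to \cref{lem:saturated-discovery} at iteration $i$.

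For the base case $i=1$, note that $\gamma \leq \lambda^*(G)/2 \leq \lambda^*(G)$. The observation on cores therefore gives $S^*(G) \subseteq V_1$, so $\lambda^*(G_1) = \lambda^*(G)$. Moreover, $G_1$ is a $\gamma$-core and $\gamma \geq \lambda^*(G)/4 \geq \lambda^*(G_1)/4$, since taking subgraphs never increases the maximum density.

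For the inductive step, assume the statement for $i$ and that the algorithm reaches iteration $i+1$. By hypothesis, $\tau \leq (1-\varepsilon)\lambda^*(G_i)$ and the minimum degree condition holds, so \cref{lem:saturated-discovery} applies to $G_i$ with $K = K_i$. Write $S^* = S^*(G_i)$ and $\lambda_i = \lambda^*(G_i)$. The key estimate is a lower bound on the density of $S^* \cap S_i$. Let $R = S^* \setminus S_i$. The lemma gives $|R| \leq |S^*|/K_i$ and $\deg_{G_i}(v) < 2\lambda_i$ for every $v \in R$. Deleting $R$ from $S^*$ removes at most $\sum_{v\in R}\deg(v) < 2\lambda_i|R|$ edges. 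Hence
$|E[S^*\cap S_i]| \geq \lambda_i|S^*| - 2\lambda_i|S^*|/K_i = (1-2/K_i)\lambda_i|S^*|$,
and since $|S^* \cap S_i| \leq |S^*|$, the density of $G_i[S_i]$ is at least $(1-2/K_i)\lambda_i$.

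It remains to pass from $G_i[S_i]$ to the core $G_{i+1} = \textsc{Core}(G_i[S_i],\gamma)$. Once we know $(1-2/K_i)\lambda_i \geq \gamma$, the densest subgraph of $G_i[S_i]$ is contained in this core, so $\lambda^*(G_{i+1}) \geq (1-2/K_i)\lambda_i$. This closes the product bound, provided $\prod_{j\le i}(1-2/K_j)\lambda^*(G) \geq \gamma$. The required inequality follows from the second bound of the statement, since $(1-\varepsilon)\lambda^*(G) \geq \lambda^*(G)/2 \geq \gamma$ for $\varepsilon \leq 1/2$.

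That second bound is a simple sum. We have $\prod_j (1-2/K_j) \geq 1 - \sum_j 2/K_j$, and
$\sum_{j\geq1} 2/K_j = \tfrac{\varepsilon^2}{2}\sum_{j\ge 0}(1-\varepsilon)^j = \varepsilon/2 \leq \varepsilon$.

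For the remaining invariants at step $i+1$, the minimum degree of $G_{i+1}$ is at least $\gamma \geq \lambda^*(G)/4 \geq \lambda^*(G_{i+1})/4$ by the definition of a core. For the threshold, $\tau \leq (1-\varepsilon)^2\lambda^*(G) \leq (1-\varepsilon)\lambda^*(G_{i+1})$, using $\lambda^*(G_{i+1}) \geq (1-\varepsilon)\lambda^*(G)$.

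The main obstacle is the density loss when $R$ is removed, because edges in $E[R]$ could be double-counted. That only helps the bound, since we are upper-bounding the number of removed edges by the sum of degrees, so the estimate is safe.
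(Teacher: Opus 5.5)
Your proof is correct and follows the paper's argument. You induct on $i$, apply \cref{lem:saturated-discovery} to $G_i$ to show $S^*(G_i)\cap S_i$ has density at least $(1-2/K_i)\lambda^*(G_i)$, and then use $\gamma \le \lambda^*(G)/2 \le (1-\varepsilon)\lambda^*(G)$ to keep a densest subgraph of $G_i[S_i]$ inside the $\gamma$-core. The differences are only cosmetic: you bound the product by $\prod_j(1-2/K_j)\ge 1-\sum_j 2/K_j \ge 1-\varepsilon/2$ instead of using the paper's exponential estimates, and you carry the minimum-degree and threshold hypotheses of \cref{lem:saturated-discovery} explicitly in the induction, where the paper only remarks on them after the proof.
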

\begin{proof}
    We show the lemma holds by induction. Firstly, the base case holds as $\lambda^*(G_1) =\lambda^*(G)$. Indeed, any vertex $v \in S^*(G)$ has at least $\lambda^*$ many edges to other vertices in $S^*(G)$, otherwise, $S^*(G) \setminus \{v\}$ has a strictly larger density. Since $\gamma \leq \frac{\lambda^*}{2}$, all vertices in $S^*(G)$ are still in $G_1 = \textsc{Core}(G, \gamma)$.

    Now, assume the claim holds for graph $G_i$.  We show it holds for $G_{i+1}$. By \cref{lem:saturated-discovery}, $|S^*(G_i) \cap S_i| \geq (1-\frac{1}{K_i})|S^*(G_i)|$. Moreover, any vertex in $S^*(G_i) \setminus S_i$ has at most $2\lambda^*(G_i) \leq 2\lambda^*(G)$ many edges to $S^*(G_i)$. Now, consider the set $S^*(G_i) \cap S_i$. This set has density
    \begin{align*}
        \frac{|E[S^*(G_i) \cap S_i]|}{|S^*(G_i) \cap S_i|} &\geq \frac{|E[S^*(G_i)]| - \sum_{v \in S^*(G_i)\setminus S_i} \deg_{G_i}(v)}{|S^*(G_i)|} \\  &> \frac{|E[S^*(G_i)]| - 2\lambda^*(G_i)|S^*(G_i) \setminus S_i|}{|S^*(G_i)|} && \rhd \forall v \in S^*(G_i) \setminus S_i \: \deg_{G_i}(v) < 2\lambda^*(G_i) \\
        &> \frac{|E[S^*(G_i)]| - \lambda^*(G_i)\frac{2}{K_i}|S^*(G_i)|}{|S^*(G_i)|} && \rhd |S^*(G_i) \setminus S_i| < \frac{1}{K_i}|S^*(G_i)|\\
        &= \left(1- \frac{2}{K_i}\right) \lambda^*(G_i) && \rhd |E[S^*(G_i)]| = \lambda^*(G_i) |S^*(G_i)|\\
        &\geq \left(\prod_{j=1}^{i} \left( 1- \frac{2}{K_j} \right)\right) \lambda^*(G) && \rhd \text{Inductive Hypothesis}
    \end{align*}

    Since $S^*(G_i) \cap S_i \subseteq G_{i}[S_i]$, this imposes the desired bound on the maximum density of $G_{i}[S_i]$.

    Moreover, by the definition of $K_j$ for any $j$, we have,
    \begin{align*}
        \lambda^*(G_{i}[S_i]) &\geq \left(\prod_{j=1}^{i} \left( 1- \frac{2}{K_j} \right)\right) \lambda^*(G)\\ &= \left(\prod_{j=1}^{i} \left( 1- \frac{\varepsilon^2(1-\varepsilon)^{j-1}}{2} \right)\right) \lambda^*(G) && \rhd K_j = \frac{4}{\varepsilon^2 (1-\varepsilon)^{j-1}}\\
        &\geq \left(\prod_{j=1}^{i} e^{- \varepsilon^2(1-\varepsilon)^{j-1}} \right) \lambda^*(G) && \rhd (1-x) \geq e^{-2x} \:, \forall x \in [0, 0.5] \\
        &= e^{-\varepsilon^2 (\sum_{j=1}^i (1-\varepsilon)^{j-1})} \lambda^*(G) \\
        &\geq e^{- \varepsilon} \lambda^*(G) && \rhd \sum_{j = 0}^{\infty} (1-\varepsilon)^j = \frac{1}{\varepsilon} \\
        &\geq (1-\varepsilon) \lambda^*(G) && \rhd (1-x) \leq e^{-x} \: , \forall x \in \mathbb R
    \end{align*}
    Now, since $(1-\varepsilon) \lambda^*(G) \geq \frac{\lambda^*(G)}{2}$ it follows that  $\lambda^*(G_i[S_i]) \geq \frac{\lambda^*(G)}{2}$. Thus, all vertices of $S^*(G_i[S_i])$ are in the $\gamma-$core of $G_i[S_i]$. Since $G_{i+1} = \textsc{Core}(G_i[S_i], \gamma)$, we have that 
    \begin{align*}
        \lambda^*(G_{i+1}) = \lambda^*(G_i[S_i]) \geq \left(\prod_{j=1}^{i} \left( 1- \frac{2}{K_j} \right)\right) \lambda^*(G) \geq (1-\varepsilon) \lambda^*(G).
    \end{align*}

    Finally, since $\tau \leq (1-\varepsilon)^2 \lambda^*(G)$ and $\lambda^*(G_i) \geq (1-\varepsilon) \lambda^*(G)$, we must have 
    \begin{align*}
        \tau \leq (1-\varepsilon)\lambda^*(G_i)
    \end{align*}
    which concludes the induction step and  the proof of \cref{lem:densitybound-iteration}.

\end{proof}

Note that \cref{lem:densitybound-iteration} shows $\tau$ always satisfies the upper bound required in the $\textsc{Saturation}$ subroutine. Moreover, since each $G_i$ is a $\gamma$-core, the minimum degree condition is also satisfied.

We now show  that the size of the graph $G_i$ decreases adequately in each iteration.

\begin{claim}
    \label{clm:size-bound}
    If \cref{alg:linear} runs for at least $i$ iterations, then,
    \begin{align*}
        n_i \leq (1-\varepsilon)^{2(i-1)} n_1 \quad \text{and} \quad m_i \leq 8(1-\varepsilon)^{2(i-1)} m_1
    \end{align*}
\end{claim}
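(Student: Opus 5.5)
The plan is to handle the two bounds separately: the vertex bound by induction on $i$, and the edge bound by sandwiching $m_i$ between a lower bound (from the core property) and an upper bound (from the density of the original graph).

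\textbf{Vertex bound.} If the algorithm reaches iteration $i+1$, then at iteration $i$ the test $|S_i| \geq (1-\varepsilon)^2 n_i$ failed. So $|S_i| < (1-\varepsilon)^2 n_i$. Since $G_{i+1} = \textsc{Core}(G_i[S_i],\gamma)$ is an induced subgraph on a subset of $S_i$, we get
\begin{align*}
n_{i+1} \leq |S_i| < (1-\varepsilon)^2 n_i .
\end{align*}
Induction from $n_1$ then gives $n_i \leq (1-\varepsilon)^{2(i-1)} n_1$.

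\textbf{Edge bound.} Here I would use the sandwich argument.

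\emph{Upper bound on $m_i$.} $G_i$ is an induced subgraph of $G$, so its density is at most $\lambda^*(G)$. Taking $V_i$ itself as a candidate set gives $m_i \leq \lambda^*(G)\, n_i$.

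\emph{Lower bound on $m_1$.} $G_1$ is the $\gamma$-core of $G$ with $\gamma \geq \lambda^*(G)/4$. By the core observation,
\begin{align*}
m_1 \geq \frac{\gamma}{2} n_1 \geq \frac{\lambda^*(G)}{8} n_1, \qquad\text{so}\qquad n_1 \leq \frac{8 m_1}{\lambda^*(G)}.
\end{align*}

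\emph{Combining.}
\begin{align*}
m_i \leq \lambda^*(G)\, n_i \leq \lambda^*(G)(1-\varepsilon)^{2(i-1)} n_1 \leq 8(1-\varepsilon)^{2(i-1)} m_1 .
\end{align*}
The case $i=1$ is consistent, since it just reads $m_1 \leq 8m_1$.

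\textbf{Main obstacle.} The only subtle point is the edge bound. Vertex shrinkage does not by itself control edges, so the argument needs both sides of the sandwich. The lower bound on edges per vertex comes from the $\gamma$-core applied at the first iteration. The upper bound comes from the global maximum density $\lambda^*(G)$. Because $\gamma$ and $\lambda^*(G)$ differ by at most a factor of 4, and the core observation loses a further factor of 2, the constant is 8.
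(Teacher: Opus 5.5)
Your proposal is correct and follows essentially the same route as the paper. You get the vertex bound by induction from the failed termination test $|S_i| < (1-\varepsilon)^2 n_i$. You get the edge bound by combining $m_i \leq \lambda^*(G)\, n_i$ with $m_1 \geq \frac{\gamma}{2} n_1 \geq \frac{\lambda^*(G)}{8} n_1$, which comes from $G_1$ being the $\gamma$-core of $G$.
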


\begin{proof}
    We prove the claim by induction. The base case trivially holds. Now, assume the upper bound holds for $n_i, m_i$, we investigate $n_{i+1}$ and $m_{i+1}$. Since the algorithm has not terminated in iteration $i$, we have $|S_i| < (1-\varepsilon)^2 n_i$, thus, $\textsc{Core}(G_i[S_i], \gamma)$ has less than $(1-\varepsilon)^2 n_i$ many vertices. Using the inductive hypothesis we get, 
    \begin{align*}
        n_{i+1} \leq (1-\varepsilon)^2 n_i \leq (1-\varepsilon)^{2i} n_1
    \end{align*}

    Now, consider $m_{i+1}$, since the density of $V_{i+1}$ can be at most $\lambda^*(G)$, we get $m_{i+1}\leq \lambda^*(G) n_{i+1}$. Using the upper bound on $n_{i+1}$ we get,
    \begin{align*}
        m_{i+1} \leq \lambda^*(G) (1-\varepsilon)^{2i} n_1
    \end{align*}
    
    Moreover, since $G_{1}$ is the $\gamma$-core of a graph, it follows that 
    \begin{align*}
        m_1 \geq \frac{\gamma}{2} n_1 \geq \frac{\lambda^*(G)}{8} n_1,
    \end{align*}
    where we used $\gamma \in [\lambda^*(G)/4, \lambda^*(G)/2]$. Putting the two inequalities together yields,
    \begin{align*}
        m_{i+1} \leq 8(1-\varepsilon)^{2i}  m_1.
    \end{align*}
\end{proof}

\begin{remark}
    \cref{clm:size-bound} implies \cref{alg:linear} terminates after $O\left( \frac{\log n}{\varepsilon} \right)$ iterations. Despite this, we show the amortized runtime of the algorithm avoids any logarithmic factor.
\end{remark}

\subsubsection{Proof of \cref{thm:undir}}
\begin{proof}[Proof of \cref{thm:undir}]
    We are finally ready to proceed with the proof of \cref{thm:undir}. We start by showing the approximation ratio guarantee.

\begin{claim}
    If \cref{alg:linear} terminates at iteration $i$, then $V_i \subseteq V$ has density $(1-5\varepsilon) \lambda^*(G)$. 
\end{claim}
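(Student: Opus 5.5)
We need $|E[V_i]| \geq (1-5\varepsilon)\lambda^*(G)\,|V_i|$ whenever the algorithm stops at iteration $i$, i.e.\ whenever $|S_i| \geq (1-\varepsilon)^2 n_i$. The plan is to combine the first guarantee of \cref{lem:saturated-discovery} with \cref{obs:assignment-bounds}, applied inside $G_i$. By \cref{lem:densitybound-iteration}, $\tau \leq (1-\varepsilon)\lambda^*(G_i)$, and $G_i$ is a $\gamma$-core. Hence the hypotheses of \cref{lem:saturated-discovery} hold for $G_i$, and the saturation call in iteration $i$ is legitimate.

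The key step is to work with the assignment $x$ produced inside the saturation procedure, rather than only with the bound $|\nu(S_i)| \geq \tau|S_i|$. That bound counts edges of $\delta(S_i)$ and could double-count. Applying \cref{obs:assignment-bounds} in the graph $G_i$ with $S = V_i$ gives
\[
|E_i| = \sum_{v \in V_i} \ell_x(v) \geq \sum_{v \in S_i} \ell_x(v).
\]
The equality holds because there is no boundary in $G_i$ itself, so $\nu_{G_i}(V_i) = E[V_i]$ and no load leaves $V_i$. The inequality holds because loads are nonnegative. Every vertex of $S_i$ is saturated or over-saturated, so each term in the last sum is at least $\tau$. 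Therefore
\[
|E[V_i]| \geq \tau |S_i| \geq \tau (1-\varepsilon)^2 n_i .
\]
This is where the termination condition is used. Since the paper states the lemma only through $|\nu(S)| \geq \tau|S|$, I would either note that the proof of the lemma actually gives $\sum_{u\in S}\ell_x(u)\ge\tau|S|$, or argue directly as above.

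It remains to plug in $\tau \geq (1-\varepsilon)^3\lambda^*(G)$. This gives $|E[V_i]|/|V_i| \geq (1-\varepsilon)^5 \lambda^*(G) \geq (1-5\varepsilon)\lambda^*(G)$ by Bernoulli's inequality.

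The main obstacle is justifying the step $\sum_{v\in V_i}\ell_x(v) = |E_i|$. This requires the assignment to be defined on $G_i$, which it is, since \textsc{Saturation} is run on $G_i$. It also requires distinguishing carefully between $\nu_{G_i}$ and $\nu_G$: $V_i$ is a subset of $V$, and the claimed density refers to the induced subgraph $G[V_i]$, which equals $G_i$ because every $G_i$ is obtained by a chain of induced subgraphs and cores. Once that identification is made, the remainder is arithmetic.
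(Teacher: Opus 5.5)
Your proof is correct and is essentially the paper's argument: the paper combines the termination condition $|S_i|\ge(1-\varepsilon)^2 n_i$ with $m_i \ge |\nu_{G_i}(S_i)| \ge \tau|S_i|$ and $\tau\ge(1-\varepsilon)^3\lambda^*(G)$ to get density at least $(1-\varepsilon)^5\lambda^*(G)\ge(1-5\varepsilon)\lambda^*(G)$. Your detour through the loads is valid but unnecessary, and your double-counting worry is unfounded: $\nu_{G_i}(S_i)$ is a set of edges of $G_i$, each counted once, so $|\nu_{G_i}(S_i)|\le m_i$ holds trivially and the lemma's stated guarantee suffices as a black box.
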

\begin{proof}
    If the algorithm terminates at iteration $i$, we must have $|S_i| \geq (1-\varepsilon)^2 n_i$. Moreover, by \cref{lem:saturated-discovery}, $|\nu_{G_i}(S_i)| \geq \tau|S_i| \geq  (1-\varepsilon)^3 \lambda^*(G) |S_i|$. Putting the two together gives,
    \begin{align*}
        m_i \geq |\nu_{G_i}(S_i)| &\geq (1-\varepsilon)^3 \lambda^*(G) |S_i| \\
        &\geq (1-\varepsilon)^5 \lambda^*(G) n_i && \rhd |S_i| \geq (1-\varepsilon)^2 n_i
    \end{align*}
    Thus, we obtain the following bound on the density of $V_i$,
    \begin{align*}
        \frac{m_i}{n_i} \geq (1-\varepsilon)^5 \lambda^*(G) \geq (1-5\varepsilon) \lambda^*(G)
    \end{align*}
    where we used $(1-x)^5 \geq 1-5x \:, \forall x \in [0, 1]$.
\end{proof}

Now, we establish the runtime of this procedure,
\begin{claim}
    The runtime of \cref{alg:linear} is bounded by $O\left( \frac{n+m}{\varepsilon^3}\log \frac{1}{\varepsilon}\right)$.
\end{claim}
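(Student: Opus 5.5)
The plan is to split the running time into a one-time preprocessing cost and the cost of the iterations, and then to bound the iteration costs by a convergent series using \cref{clm:size-bound}.

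\textbf{Preprocessing.} Charikar's peeling algorithm runs in $O(n+m)$ time, and so does the computation of $G_1=\textsc{Core}(G,\gamma)$ (\cref{lem:core}). Choosing $\tau$ in $[(1-\varepsilon)^3\lambda^*(G),(1-\varepsilon)^2\lambda^*(G)]$ with $L\tau$ integral for some $L=\Theta(1/\varepsilon)$ takes $O(1)$ arithmetic once a constant-factor estimate of $\lambda^*(G)$ is known. I would take this as given in the same way the algorithm description does. I would also note that $n_1\le n$, $m_1\le m$, and $m_1\ge \lambda^*(G)n_1/8\ge n_1/16$, so $n_1+m_1=O(m_1)$.

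\textbf{Iteration $i$.} By \cref{lem:saturated-discovery}, whose hypotheses hold by \cref{lem:densitybound-iteration} together with the $\gamma$-core property, the call to $\textsc{Saturation}$ costs $O\big((n_i+m_i)\log K_i/\varepsilon^2\big)$. Forming $G_i[S_i]$ and taking its $\gamma$-core costs only $O(n_i+m_i)$. Plugging in $K_i=4\varepsilon^{-2}(1-\varepsilon)^{-(i-1)}$ gives
\[
\log K_i = O\big(\log\tfrac1\varepsilon\big)+(i-1)\log\tfrac{1}{1-\varepsilon} = O\big(\log\tfrac1\varepsilon + i\varepsilon\big),
\]
using $\log\frac{1}{1-\varepsilon}=O(\varepsilon)$ for $\varepsilon\le 1/2$. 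By \cref{clm:size-bound}, $n_i+m_i\le 8(1-\varepsilon)^{2(i-1)}(n_1+m_1)$. The total cost of the iterations is therefore at most
\[
O\Big(\frac{n_1+m_1}{\varepsilon^2}\Big)\sum_{i\ge1}(1-\varepsilon)^{2(i-1)}\big(\log\tfrac1\varepsilon + i\varepsilon\big).
\]

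\textbf{The series.} This is the step that needs care, because the growth of $\log K_i$ has to be absorbed by the geometric decay. We have $\sum_i (1-\varepsilon)^{2(i-1)}\le \frac{1}{1-(1-\varepsilon)^2}\le \frac1\varepsilon$, so the first term contributes $O(\varepsilon^{-1}\log\frac1\varepsilon)$. For the second term, $\sum_i i q^{i-1}=(1-q)^{-2}$ with $q=(1-\varepsilon)^2$ and $1-q\ge\varepsilon$, so $\varepsilon\sum_i i q^{i-1}\le \varepsilon\cdot\varepsilon^{-2}=\varepsilon^{-1}$. This is dominated by the first term once $\log\frac1\varepsilon\ge1$, i.e.\ $\varepsilon\le 1/2$.

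\textbf{Total.} The iterations cost $O\big(\frac{n+m}{\varepsilon^3}\log\frac1\varepsilon\big)$, and adding the linear preprocessing gives the claimed bound. The resulting bound has no dependence on the number of iterations, which is $O(\log n/\varepsilon)$, precisely because the per-iteration cost decays geometrically.
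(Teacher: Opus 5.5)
Your proof is correct and follows the paper's argument. Both bound the per-iteration cost by $O\bigl((n_i+m_i)\log K_i/\varepsilon^2\bigr)$, use the geometric shrinkage from \cref{clm:size-bound}, and sum a convergent series. The only difference is bookkeeping: you bound $(i-1)\log\frac{1}{1-\varepsilon}$ by $O(i\varepsilon)$ and sum $\sum_i i q^{i-1}=(1-q)^{-2}$, whereas the paper uses $\log x\le x$ to turn $(1-\varepsilon)^{2(i-1)}\log\frac{4}{(1-\varepsilon)^{i-1}}$ into the plain geometric term $4(1-\varepsilon)^{i-1}$.

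One aside, which does not affect the claim since \cref{alg:linear} takes $\tau$ as input: a constant-factor estimate of $\lambda^*(G)$ does not let you place $\tau$ in $[(1-\varepsilon)^3\lambda^*(G),(1-\varepsilon)^2\lambda^*(G)]$ with $O(1)$ arithmetic. That requires a $(1-O(\varepsilon))$-accurate estimate, which the paper obtains separately through the $\varepsilon_i=2^{-(i+2)}$ bootstrapping.
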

\begin{proof}
By \cref{lem:core}, computing the graph $G_1$ takes $O(n+m) $ time.  Assume the algorithm runs for $T$ iterations. For each $1 \leq i \leq T$, computing the set $S_i$ takes $O\left( \frac{(n_i+m_i) \log K_i}{\varepsilon^2} \right)$. After that, deciding the appropriate case and computing the $\gamma$-core of $G_i[S_i]$, if needed, takes $O(n_i + m_i)$ time. Thus, the overall runtime of the algorithm is bounded by,
\begin{align*}
    O\left( n+m + \sum_{i=1}^T \frac{(n_i+m_i) \log K_i}{\varepsilon^2}  \right) 
\end{align*}
We can further bound this term by the following:
\begin{align*}
    n+m + \sum_{i=1}^T \frac{(n_i+m_i)\log K_i}{\varepsilon^2} & \leq n+m + 8\sum_{i=1}^T \frac{(n+m) (1-\varepsilon)^{2(i-1)} \log K_i}{\varepsilon^2} \\
    &\leq n+m+\frac{8(n+m)}{\varepsilon^2} \sum_{i=1}^\infty (1-\varepsilon)^{2(i-1)} \left( 2\log \frac{1}{\varepsilon} + \log \frac{4}{(1-\varepsilon)^{i-1}}  \right)
\end{align*}
where the first line uses \cref{clm:size-bound}  and the second line uses $K_i = \frac{4}{\varepsilon^2(1-\varepsilon)^{i-1}}$.

Now, using $\log x \leq x$ for $x \geq 1$ we obtain,
\begin{align*}
    \sum_{i=1}^\infty (1-\varepsilon)^{2(i-1)} \left( 2\log \frac{1}{\varepsilon} + \log \frac{4}{(1-\varepsilon)^{i-1}} \right) &\leq 2\log \frac{1}{\varepsilon}\sum_{i=1}^\infty (1-\varepsilon)^{i-1} + 4\sum_{i=1}^\infty \frac{(1-\varepsilon)^{2(i-1)}}{(1-\varepsilon)^{i-1}} \\
    &\leq \left( 2\log \frac{1}{\varepsilon} + 4 \right) \sum_{i=1}^\infty (1-\varepsilon)^{i-1} \\
    &= O\left( \frac{1}{\varepsilon} \log \frac{1}{\varepsilon} \right)
\end{align*}
\end{proof}

Thus, given a feasible $\tau$, we can bound the overall runtime as
\begin{align*}
     n+m+\frac{8(n+m)}{\varepsilon^2} \sum_{i=1}^\infty (1-\varepsilon)^{2(i-1)} \left( 2\log \frac{1}{\varepsilon} + \log \frac{1}{(1-\varepsilon)^{i-1}} \right) =O \left( \frac{n+m}{\varepsilon^3} \log \frac{1}{\varepsilon} \right)
\end{align*}

\paragraph{Finding the Desired Threshold Parameter $\tau$ Efficiently.} Finally, we explain how we can obtain the desired threshold parameter $\tau$ with a constant overhead. Simply using a binary search to find $\tau$ adds an overhead of $\log \frac{1}{\varepsilon}$. To avoid this, we use the following simple trick.

Let $\varepsilon_i = 2^{-(i+2)}$ for all $i \in \mathbb N_{\geq 1}$. Firstly, we obtain a $1-5\varepsilon_1$-approximation for $\lambda^*(G)$ in $O(n+m)$ by running Charikar's greedy peeling algorithm \cite{charikar2000}. Note that $1-5\varepsilon_1 = \frac{3}{8} < \frac{1}{2}$. Let $\lambda_1$ denote this value. Given a $(1-5\varepsilon_i)$-approximation solution with density $\lambda_i$, we know $\lambda^*(G) \in [\lambda_i, \frac 1 {(1-5\varepsilon_i)} \lambda_i]$, thus, using constantly many calls to \cref{alg:linear} with parameter $\varepsilon_{i+1}$ for different values of $\tau \in \left[(1-\varepsilon_{i+1})^3\lambda_i, \frac {(1-\varepsilon_{i+1})} {(1-5\varepsilon_i)} \lambda_i\right]$ we obtain a $(1-5\varepsilon_{i+1})$ approximation solution with an additional runtime of $O\left( \frac{n+m}{\varepsilon_{i+1}^3} \log \frac{1}{\varepsilon_{i+1}} \right)$.

To ensure that there exists a large integer $L = \Theta\left(\frac{1}{\varepsilon} \right)$, simply let $L = \lceil \frac{c}{\varepsilon}\rceil$ for a large enough constant $c$ and simply let $\tau \leftarrow \frac{1}{L}\lfloor L \tau \rfloor$, this is feasible as $\lambda^*(G) \geq \frac{1}{2}$.

We repeat this until we obtain a $(1-5\varepsilon_I)$ approximation where $5\varepsilon_I \leq \varepsilon < 5\varepsilon_{I-1}$. The overall runtime of this procedure is dominated by the last iteration of it, since $\varepsilon < 5\varepsilon_{I-1} = 10 \varepsilon_I$, the overall runtime of the algorithm is bounded by $O\left( \frac{n+m}{\varepsilon^3} \log \frac{1}{\varepsilon} \right)$.
\end{proof}

\section{A Near-Linear $(\frac{1}{2}-\varepsilon)$-Approximation for the Densest At-Least-$k$ Subgraph Problem}

In this section, we prove \cref{thm:atleast} by adapting our flow framework to the densest at-least-$k$ subgraph problem. We assume $E[G] \neq \emptyset$ as otherwise, the densest at-least-$k$ subgraph has zero density. This ensures $\lambda^*_{\geq k}(G) \geq \frac{1}{n}$.

\begin{proof}[Proof of \cref{thm:atleast}]

Assume we are given $\tau \leq  \frac{ \lambda^*_{\geq k}(G)}{2}$. Using the flow framework, create a directed graph $H$ with threshold $\tau$. Now, add blocking flows to $H$ until the shortest $s,t$ flow-path has hop distance at least $T+3$ for $T = \left\lceil \frac{\log m}{\varepsilon} \right\rceil + 2$. 

Let $D$ denote the residual graph of $H$ at this moment. We utilize the orientation given by $D$ to construct an approximately dense subgraph with at least $k$ vertices. Let $x$ be the assignment function induced by $D$ and $\ell_x$ the load corresponding to it.

Obtain $D'$ from $D$ by deleting $s, t$ and reversing the direction of all edges.

Now, let $B_0$ be the set of all over-saturated vertices in $D$ and let $B_i$ be the set of all vertices in $D'$ of hop distance at most $i$ from $B_0$. Since there are no $s$-$t$ flow paths of length at most $T+2$, there is no path of length at most $T$ from an over-saturated vertex to an under-saturated vertex in $D'$. I.e., $B_T$ does not contain any under-saturated vertex. We divide the analysis into two cases. 

\begin{itemize}
    \item \textbf{Case 1: $|B_1| < k$.} We prove $ \sum_{u \in B_0} \ell_x(u) \geq \frac{k}{2}\lambda^*_{\geq k}(G)$. Consider the densest at-least-$k$ subgraph $S^*_{\geq k}(G)$, by \cref{obs:assignment-bounds},
    \begin{align}
        \label{eq:atleast-lower}
        \sum_{u \in S^*_{\geq k}(G)} \ell_x(u) \geq |E[S^*_{\geq k}(G)]| = \lambda^*_{\geq k}(G)|S^*_{\geq k}(G)| 
    \end{align}

    Now, let $U^*$ be the non over-saturated vertices in $S^*_{\geq k}(G)$. By definition,
    \begin{align}
        \label{eq:atleast-upper}
         \sum_{u \in U^*} \ell_x(u) \leq \tau |U^*| \leq \tau  |S^*_{\geq k}(G)| 
    \end{align}

    Therefore,  $B^* = B_0\cap S^*_{\geq k}(G)$, the over-saturated vertices in $S^*_{\geq k}(G)$, must satisfy

    \begin{align*}
        \sum_{u \in B^*} \ell_x(u) &= \left( \sum_{u \in S^*_{\geq k}(G)}\ell_x(u) \right) - \left(  \sum_{u \in U^*}\ell_x(u) \right) \\
        &\geq (\lambda^*_{\geq k}(G) - \tau) |S^*_{\geq k}(G)| \\
        &\geq \frac{\lambda^*_{\geq k}(G)}{2} |S^*_{\geq k}(G)| 
    \end{align*}
    where the second line combines  \eqref{eq:atleast-lower} and \eqref{eq:atleast-upper} and the third line uses $\tau \leq \frac{\lambda^*_{\geq k}(G)}{2}$.

    Since $B^* \subseteq B_0$, the load on the vertices in $B^*$ is equal to $|E[B^*]| + \ell_{x}^{\mathrm{inc}}(B^*, B_1)$. Since the edges $e=uv$ where $u \in B^*, v \notin B^*$ that contribute a positive $x_{e, u}$ to $\ell_{x}^{\mathrm{inc}}(B^*, B_1)$ are directed in the residual graph from $v$ to $u$, they are present in $E[B_1]$.
    
    Therefore, by \cref{obs:assignment-bounds},
    \begin{equation}
    \label{eq:B1-bound}
    \begin{aligned}
        |E[B_1]|
        &\geq |E[B^*]| + \ell_{x}^{\mathrm{inc}}(B^*, B_1) \\
        &= \sum_{u \in B^*} \ell_x(u) \\
        &\geq \frac{\lambda^*_{\geq k}(G)}{2} k
    \end{aligned}
    \end{equation}

    Since $|B_1| < k$, let $B^\dagger$ be obtained by arbitrarily adding $k-|B_1|$ new vertices to $B_1$. Now, the density of $B^\dagger$ is at least,
    \begin{align*}
        \frac{|E[B^\dagger]|}{|B^\dagger|} \geq \frac{|E[B_1]|}{k} \geq \frac{\lambda^*_{\geq k}}{2}
    \end{align*}
    Therefore, in this case we obtain a $\frac{1}{2}$ approximation solution for the densest at-least-$k$ subgraph problem by arbitrarily adding vertices to $B_1$ until it has size $k$.
    
    \item \textbf{Case 2: $|B_1| \geq k$.} Since there are no augmenting flow paths of length at most $T+2$, any path from an over-saturated vertex to an under-saturated vertex has length at least $T$ in $D'$. Therefore, for all $0\leq i \leq T$, $B_i$ does not contain under-saturated vertices.  We conclude the proof in this case by proving the following lemma.
    \begin{lemma}
    \label{lem:at-least-dense-discover}
    Assume there are no $s$-$t$ augmenting paths of length at most $T+2$ in $D$ and $|B_1| \geq k$. Then, for some $1 \leq i \leq T$, $B_i$ has density at least $(1-\varepsilon) \tau$ and $|B_i| \geq k$.
    \end{lemma}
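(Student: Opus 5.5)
We argue by contradiction, mirroring the growth argument of \cref{clm:increase-size} and \cref{clm:increase-size-vertices} but in the reversed residual graph $D'$. Suppose that for every $1 \leq i \leq T$ we have $|E[B_i]| < (1-\varepsilon)\tau |B_i|$. Since $B_i \supseteq B_1$, every $B_i$ with $i\ge 1$ has $|B_i| \geq k$, so the cardinality requirement is automatic. The goal is therefore to show that, under the contradiction hypothesis, the sets $B_i$ grow by a factor $e^{\varepsilon}$ per step. After $T = \lceil \log m/\varepsilon\rceil + 2$ steps this would force $|B_T| > m$, and for $B_T$ to be nonempty in any relevant sense this is impossible, since $n \le 2m$ on non-isolated vertices. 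If necessary, we could restrict to vertices with positive degree or replace $m$ by $n$ in the argument.

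The key step is a lower bound on the load leaving $B_i$. No vertex of $B_i$ (for $i \leq T$) is under-saturated, so every vertex in $B_i$ has load at least $\tau$, and by \cref{obs:assignment-bounds},
\begin{align*}
\tau|B_i| \le \sum_{u\in B_i}\ell_x(u) = |E[B_i]| + \ell^{\mathrm{inc}}_x(B_i,V).
\end{align*}
Combining this with $|E[B_i]| < (1-\varepsilon)\tau|B_i|$ gives $\ell^{\mathrm{inc}}_x(B_i,V) > \varepsilon\tau|B_i|$. Now take an edge $e=uv$ with $u\in B_i$, $v\notin B_i$ and $x_{e,u}>0$. Then $f(v,u)-f(u,v)<\tfrac12$, so the residual arc $(v,u)$ exists in $D$, and hence the arc $(u,v)$ exists in $D'$. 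Consequently $v\in B_{i+1}\setminus B_i$. The mass $x_{e,u}$ is incoming load for $B_i$, but we need to convert it into a count of new vertices, and this is where the direction matters. In the DSG proof, the outgoing load was received by the new vertices; here the incoming load is taken \emph{from} the new vertices' edges.

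I would instead count the edges themselves. Each such edge contributes $x_{e,u}\le 1$, so at least $\varepsilon\tau|B_i|$ edges cross from $B_i$ to $B_{i+1}\setminus B_i$. This yields
\begin{align*}
|E[B_{i+1}]| \ge |E[B_i]| + \varepsilon\tau|B_i|.
\end{align*}
The main obstacle is turning this edge growth into vertex growth, since the new vertices are not under-saturated but have no upper bound on their load. My first attempt is to track the potential $|E[B_i]|$ rather than $|B_i|$. Because $|E[B_{i+1}]| < (1-\varepsilon)\tau|B_{i+1}|$, we get
\begin{align*}
|B_{i+1}| > \frac{|E[B_{i+1}]|}{(1-\varepsilon)\tau} \ge \frac{|E[B_i]|+\varepsilon\tau|B_i|}{(1-\varepsilon)\tau}.
\end{align*}
Alternatively, the quantity $|E[B_i]|/\tau + |B_i|$ may serve as a cleaner potential. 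I would check whether combining the two inequalities gives $|E[B_{i+1}]|\ge(1+c\varepsilon)|E[B_i]|$ for some constant $c$, using that $|E[B_i]|\ge \varepsilon\tau|B_{i-1}|$ and that $B_i$ grows at least proportionally. Iterating the resulting geometric growth over $T$ rounds would push $|E[B_T]|$ beyond $m$ (starting from $|E[B_1]|\ge \varepsilon\tau\ge \varepsilon/(2n)$, so the constant $+2$ in $T$ and the choice $\log m$ absorb the $\log n$ starting loss up to constants). This is the contradiction. The delicate part is the base case: showing $|E[B_1]|$ is not too tiny, and choosing the potential so that the per-step factor is $1+\Omega(\varepsilon)$. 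I expect this to require care, and possibly to necessitate adjusting $T$ by a constant factor.
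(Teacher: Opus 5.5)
There is a genuine gap. Your ingredients are right: the contradiction hypothesis, the fact that every crossing edge with $x_{e,u}>0$ lands in $B_{i+1}$, and the bound $|E[B_{i+1}]| \ge |E[B_i]| + \varepsilon\tau|B_i|$. But you never establish the per-step growth. You leave it as something you \emph{would check}, and you detour through the sparsity of $B_{i+1}$, an alternative potential, and an unproven claim that $B_i$ grows proportionally. The missing step is one line: use the contradiction hypothesis on $B_i$ itself, in the form $\tau|B_i| > |E[B_i]|/(1-\varepsilon)$, inside the bound you already have:
\[
|E[B_{i+1}]| \;\ge\; |E[B_i]| + \varepsilon\tau|B_i| \;>\; |E[B_i]| + \tfrac{\varepsilon}{1-\varepsilon}\,|E[B_i]| \;=\; \tfrac{1}{1-\varepsilon}\,|E[B_i]| \;\ge\; e^{\varepsilon}\,|E[B_i]|.
\]
The paper gets this slightly more directly, without splitting off the $\varepsilon$-part of the incoming load: $|E[B_{i+1}]| \ge |E[B_i]| + \ell^{\mathrm{inc}}_x(B_i,B_{i+1}) = \sum_{u\in B_i}\ell_x(u) \ge \tau|B_i| > |E[B_i]|/(1-\varepsilon)$. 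No vertex growth is needed. Your initial target $|B_T|>m$ would not be a contradiction anyway, since $n$ may exceed $m$.

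The base case, as you handle it, fails for the stated $T$. Starting from $|E[B_1]|\ge\varepsilon\tau\ge\varepsilon/(2n)$, exceeding $m$ needs about $\ln(2nm/\varepsilon)/\varepsilon$ rounds. The $+2$ in $T=\lceil \log m/\varepsilon\rceil+2$ does not absorb the extra $\ln(2n/\varepsilon)/\varepsilon$, and enlarging $T$ would change the lemma.

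The missing observation is integrality: $|E[B_1]|$ is a positive integer, hence $|E[B_1]|\ge 1$. Positivity follows from \eqref{eq:B1-bound}. Alternatively, $|B_1|\ge k\ge 1$ forces $B_0\neq\emptyset$, and then $|E[B_1]|\ge\sum_{u\in B_0}\ell_x(u)>\tau|B_0|>0$.

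Iterating the growth step for $i=1,\dots,T-1$ then gives $|E[B_T]|\ge e^{\varepsilon(T-1)}|E[B_1]|\ge e^{\varepsilon(T-1)}>m$. This is the contradiction, with exactly the stated $T$.
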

    
    \begin{proof} Assume for contradiction that no $B_i$ has density $(1-\varepsilon) \tau$.
        For $0\leq i \le T$, $B_i$ does not contain any under-saturated vertex. Moreover, by definition, an edge $e=uv$ with $u \in B_i$, $v \notin B_i$ contributes $x_{e,u}$ to $\ell_{x}^{\mathrm{inc}}(B_i, B_{i+1})$ if and only if the directed edge $(v,u)$ is present in the residual graph $D$. Thus, all edges that contribute $x_{e,u} > 0$ to $\ell_{x}^{\mathrm{inc}}(B_i, B_{i+1})$ must be inside $B_{i+1}$ by definition. This implies,
    \begin{align*}
         |E[B_{i+1}]| &\geq |E[B_i]| + \ell_{x}^{\mathrm{inc}}(B_i, B_{i+1}) \\ &= \sum_{u \in B_i}\ell_x(u)
         \\ &\geq   \sum_{u \in B_i} 
        \tau  \\
    &= \tau |B_i|
    \end{align*}
    where the second line follows from \cref{obs:assignment-bounds} and the last line follows since no vertex in $B_i$ is under-saturated.
    
    Moreover, since $B_i$ is not $(1-\varepsilon) \tau$ dense,
    \begin{align*}
          |E[B_i]| < (1-\varepsilon)  \tau |B_i|
    \end{align*}
    This implies,
    \begin{align}
        \label{eq:edgecountbound-atleast}
        |E[B_{i+1}]| \geq \frac{1}{1-\varepsilon} |E[B_i]| \geq e^\varepsilon |E[B_i]| 
    \end{align}
    where we used $\frac{1}{1-x} \geq e^x \;, \forall x\in [0, 1)$.

    In \eqref{eq:B1-bound}, we showed $|E[B_1]| \geq \frac{\lambda^*_{\geq k}(G)}{2} k > 0$, since the number of edges is an integer, $|E[B_1]|\geq 1$, iteratively applying \eqref{eq:edgecountbound-atleast} gives, $|E[B_{T}]| \geq e^{\varepsilon(T-1)} > m$ as $T = \left\lceil \frac{\log m}{\varepsilon} \right\rceil + 2$. This completes the contradiction as $B_T$ is a subgraph of $G$.
    
    Therefore, some $B_i$ must have density at least $(1-\varepsilon)\tau$.
    \end{proof}
    By \cref{lem:at-least-dense-discover}, we can simply check all $B_i$ in time $O(n+m)$ and find a subgraph with density $(1-\varepsilon) \tau$. Note that to obtain this graph we needed to add $T = O\left( \frac{\log n}{\varepsilon} \right)$ many blocking flows to $D$. Each blocking flow can be found in $O\left( {(m+n) \log n} \right)$ using dynamic trees \cite{Goldberg1990FindingMC,SLEATOR1983362}, therefore, we can find a $(1-\varepsilon)\tau$ dense solution in time $O\left( \frac{(m+n)\log^2 n}{\varepsilon} \right)$.
    
    Now, we utilize this to obtain  a $\frac{1}{2}-\varepsilon$ approximation solution. Assume we are given a $\tau \in [(\frac{1}{2}- \varepsilon') \lambda^*_{\geq k}(G), \frac{1}{2} \lambda^*_{\geq k}(G)]$, using the above procedure we can turn this into a $(1-\varepsilon')\tau \geq (1-\varepsilon')(\frac{1}{2}- \varepsilon') \lambda^*_{\geq k}(G)$ dense solution. Since $(\frac{1}{2} - x)(1-x) \geq (\frac{1}{2}-2x)$ for $x \in \mathbb R_{\geq 0}$, this solution is a $\frac{1}{2}-2\varepsilon'$ approximation solution for the densest-at-least-$k$ subgraph problem.

    Firstly, we obtain a $\frac1 3$-approximation for the densest-at-least-$k$ subgraph problem by running Andersen's linear time algorithm \cite{Andersen2009}, let $\lambda$ denote this value. We know  $\lambda^*_{\geq k}(G) \in [\lambda, 3 \lambda]$, thus, using $O\left(\log \frac{1}{\varepsilon} \right)$ calls to our procedure with a binary search we can obtain a $\tau \in [\frac{1-\varepsilon}{2} \lambda^*_{\geq k}(G), \frac{1}{2} \lambda^*_{\geq k}(G)]$. 

Using such $\tau$ gives a $(\frac{1}{2} - \frac{\varepsilon}{2}) (1-\frac{\varepsilon}{2}) \geq \frac{1}{2} - \varepsilon$-approximation solution to the densest at-least-$k$ subgraph problem. The overall runtime of the algorithm is bounded by $O\left( \frac{(n+m)\log^2 n \log \frac{1}{\varepsilon}}{\varepsilon}  \right)$.

\end{itemize}

\end{proof}

\paragraph{Acknowledgments} We are grateful to Kent Quanrud for the invaluable insights he shared with us during the early stages of this project.

\bibliographystyle{plain}
\bibliography{references}

\begin{thebibliography}{10}

\bibitem{andersen2010local}
Reid Andersen.
\newblock A local algorithm for finding dense subgraphs.
\newblock {\em ACM Transactions on Algorithms (TALG)}, 6(4):1--12, 2010.

\bibitem{Andersen2009}
Reid Andersen and Kumar Chellapilla.
\newblock Finding dense subgraphs with size bounds.
\newblock In {\em Algorithms and Models for the Web-Graph}, pages 25--37. Springer Berlin Heidelberg, 2009.

\bibitem{AsahiroHassinIwama02}
Yuichi Asahiro, Refael Hassin, and Kazuo Iwama.
\newblock Complexity of finding dense subgraphs.
\newblock {\em Discrete Applied Mathematics}, 121(1--3):15--26, 2002.

\bibitem{asahiro1996}
Yuichi Asahiro, Kazuo Iwama, Hisao Tamaki, and Takeshi Tokuyama.
\newblock Greedily finding a dense subgraph.
\newblock In {\em Algorithm Theory (SWAT)}, pages 136--148, Berlin, Heidelberg, 1996. Springer Berlin Heidelberg.

\bibitem{bahmani2014}
Bahman Bahmani, Ashish Goel, and Kamesh Munagala.
\newblock Efficient primal-dual graph algorithms for {MapReduce}.
\newblock In Anthony Bonato, Fan~Chung Graham, and Pawe{\l} Pra{\l}at, editors, {\em Algorithms and Models for the Web Graph}, volume 8882 of {\em Lecture Notes in Computer Science}, pages 59--78. Springer, 2014.

\bibitem{batagelj2003cores}
Vladimir Batagelj and Matja{\v z} Zaver{\v s}nik.
\newblock An {$O(m)$} algorithm for cores decomposition of networks.
\newblock {\em CoRR}, cs.DS/0310049, 2003.

\bibitem{BhaskaraEtAl10}
Aditya Bhaskara, Moses Charikar, Eden Chlamtac, Uriel Feige, and Aravindan Vijayaraghavan.
\newblock Detecting high log-densities: An $o(n^{1/4})$ approximation for densest $k$-subgraph.
\newblock In {\em Proceedings of the 42nd ACM Symposium on Theory of Computing}, pages 201--210. Association for Computing Machinery, 2010.

\bibitem{boob2020}
Digvijay Boob, Yu~Gao, Richard Peng, Saurabh Sawlani, Charalampos~E. Tsourakakis, Di~Wang, and Junxing Wang.
\newblock Flowless: Extracting densest subgraphs without flow computations.
\newblock In {\em Proceedings of The Web Conference 2020}, WWW '20, pages 573--583. Association for Computing Machinery, 2020.

\bibitem{BravermanEtAl17}
Mark Braverman, Young~Kun Ko, Aviad Rubinstein, and Omri Weinstein.
\newblock {ETH} hardness for densest-$k$-subgraph with perfect completeness.
\newblock In {\em Proceedings of the 2017 Annual ACM-SIAM Symposium on Discrete Algorithms}, pages 1326--1341. Society for Industrial and Applied Mathematics, 2017.

\bibitem{charikar2000}
Moses Charikar.
\newblock Greedy approximation algorithms for finding dense components in a graph.
\newblock In {\em Approximation Algorithms for Combinatorial Optimization}, volume 1913 of {\em Lecture Notes in Computer Science}, pages 84--95. Springer, 2000.

\bibitem{chekuri2022}
Chandra Chekuri, Kent Quanrud, and Manuel~R. Torres.
\newblock Densest subgraph: Supermodularity, iterative peeling, and flow.
\newblock In {\em Proceedings of the 2022 Annual ACM-SIAM Symposium on Discrete Algorithms}, SODA, pages 1531--1555. Society for Industrial and Applied Mathematics, 2022.

\bibitem{ChenEtAl15Size}
Wenbin Chen, Lingxi Peng, Jianxiong Wang, Fufang Li, and Maobin Tang.
\newblock Algorithms for the densest subgraph with at least $k$ vertices and with a specified subset.
\newblock In {\em Combinatorial Optimization and Applications}, volume 9486 of {\em Lecture Notes in Computer Science}, pages 566--573. Springer, 2015.

\bibitem{ChenEtAl16Size}
Wenbin Chen, Nagiza~F. Samatova, Matthias~F. Stallmann, William Hendrix, and Weiqin Ying.
\newblock On size-constrained minimum $s$--$t$ cut problems and size-constrained dense subgraph problems.
\newblock {\em Theoretical Computer Science}, 609:434--442, 2016.

\bibitem{danisch2017}
Maximilien Danisch, T.-H.~Hubert Chan, and Mauro Sozio.
\newblock Large scale density-friendly graph decomposition via convex programming.
\newblock In {\em Proceedings of the 26th International Conference on World Wide Web}, WWW '17, pages 233--242, Republic and Canton of Geneva, Switzerland, 2017. International World Wide Web Conferences Steering Committee.

\bibitem{dinic1970algorithm}
E.~A. Dinic.
\newblock Algorithm for solution of a problem of maximum flow in networks with power estimation.
\newblock {\em Soviet Mathematics Doklady}, 11:1277--1280, 1970.

\bibitem{even1975network}
Shimon Even and R.~Endre Tarjan.
\newblock Network flow and testing graph connectivity.
\newblock {\em SIAM Journal on Computing}, 4(4):507--518, 1975.

\bibitem{Farago2019}
András Faragó and Zohre R.~Mojaveri.
\newblock In search of the densest subgraph.
\newblock {\em Algorithms}, 12(8), 2019.

\bibitem{FeigeKortsarzPeleg01}
Uriel Feige, David Peleg, and Guy Kortsarz.
\newblock The dense $k$-subgraph problem.
\newblock {\em Algorithmica}, 29(3):410--421, 2001.

\bibitem{gallo1989}
Giorgio Gallo, Michael~D. Grigoriadis, and Robert~E. Tarjan.
\newblock A fast parametric maximum flow algorithm and applications.
\newblock {\em SIAM Journal on Computing}, 18(1):30--55, 1989.

\bibitem{goldberg1984}
Andrew~V. Goldberg.
\newblock Finding a maximum density subgraph.
\newblock Technical Report UCB/CSD-84-171, University of California, Berkeley, 1984.

\bibitem{Goldberg1990FindingMC}
Andrew~V. Goldberg and Robert~Endre Tarjan.
\newblock Finding minimum-cost circulations by successive approximation.
\newblock {\em Math. Oper. Res.}, 15:430--466, 1990.

\bibitem{GoldsteinLangberg09}
Doron Goldstein and Michael Langberg.
\newblock The dense $k$ subgraph problem, 2009.
\newblock Revised March 2010.

\bibitem{harb2022}
Elfarouk Harb, Kent Quanrud, and Chandra Chekuri.
\newblock Faster and scalable algorithms for densest subgraph and decomposition.
\newblock In S.~Koyejo, S.~Mohamed, A.~Agarwal, D.~Belgrave, K.~Cho, and A.~Oh, editors, {\em Advances in Neural Information Processing Systems}, volume~35, pages 26966--26979. Curran Associates, Inc., 2022.

\bibitem{harbESA2023}
Elfarouk Harb, Kent Quanrud, and Chandra Chekuri.
\newblock Convergence to lexicographically optimal base in a (contra)polymatroid and applications to densest subgraph and tree packing.
\newblock In {\em 31st Annual European Symposium on Algorithms (ESA 2023)}, volume 274 of {\em Leibniz International Proceedings in Informatics (LIPIcs)}, pages 56:1--56:17, 2023.

\bibitem{JonesEtAl23SoS}
Chris Jones, Aaron Potechin, Goutham Rajendran, and Jeff Xu.
\newblock Sum-of-squares lower bounds for densest $k$-subgraph.
\newblock In {\em Proceedings of the 55th Annual ACM Symposium on Theory of Computing}, pages 84--95. Association for Computing Machinery, 2023.

\bibitem{kannan1999analyzing}
Ravindran Kannan and V.~Vinay.
\newblock Analyzing the structure of large graphs.
\newblock Technical report, Forschungsinst. f{\"u}r Diskrete Mathematik, 1999.

\bibitem{Khot06}
Subhash Khot.
\newblock Ruling out {PTAS} for graph min-bisection, dense $k$-subgraph, and bipartite clique.
\newblock {\em SIAM Journal on Computing}, 36(4):1025--1071, 2006.

\bibitem{khuller2009}
Samir Khuller and Barna Saha.
\newblock On finding dense subgraphs.
\newblock In {\em Automata, Languages and Programming}, volume 5555 of {\em Lecture Notes in Computer Science}, pages 597--608. Springer, 2009.

\bibitem{KortsarzPeleg93}
Guy Kortsarz and David Peleg.
\newblock On choosing a dense subgraph.
\newblock In {\em Proceedings of the 34th Annual IEEE Symposium on Foundations of Computer Science}, pages 692--701. IEEE, 1993.

\bibitem{LaekhanukitManurangsiTrabelsi26}
Bundit Laekhanukit, Pasin Manurangsi, and Ohad Trabelsi.
\newblock A note on approximability of densest at-least-$k$-subgraph, 2026.

\bibitem{lanciano2024survey}
Tommaso Lanciano, Atsushi Miyauchi, Adriano Fazzone, and Francesco Bonchi.
\newblock A survey on the densest subgraph problem and its variants.
\newblock {\em ACM Computing Surveys}, 56(8):1--40, 2024.

\bibitem{Lee2010}
Victor~E. Lee, Ning Ruan, Ruoming Jin, and Charu Aggarwal.
\newblock {\em A Survey of Algorithms for Dense Subgraph Discovery}, pages 303--336.
\newblock Springer US, 2010.

\bibitem{Manurangsi17}
Pasin Manurangsi.
\newblock Almost-polynomial ratio {ETH}-hardness of approximating densest $k$-subgraph.
\newblock In {\em Proceedings of the 49th Annual ACM SIGACT Symposium on Theory of Computing}, pages 954--961. Association for Computing Machinery, 2017.

\bibitem{Manurangsi18}
Pasin Manurangsi.
\newblock Inapproximability of maximum biclique problems, minimum $k$-cut and densest at-least-$k$-subgraph from the small set expansion hypothesis.
\newblock {\em Algorithms}, 11(1):10, 2018.

\bibitem{matula1983smallest}
David~W. Matula and Leland~L. Beck.
\newblock Smallest-last ordering and clustering and graph coloring algorithms.
\newblock {\em Journal of the ACM}, 30(3):417--427, 1983.

\bibitem{nguyenEne2024}
Ta~Duy Nguyen and Alina Ene.
\newblock Multiplicative weights update, area convexity and random coordinate descent for densest subgraph problems.
\newblock In {\em Proceedings of the 41st International Conference on Machine Learning}, volume 235 of {\em Proceedings of Machine Learning Research}, pages 37683--37706. PMLR, 2024.

\bibitem{seidman1983network}
Stephen~B. Seidman.
\newblock Network structure and minimum degree.
\newblock {\em Social Networks}, 5(3):269--287, 1983.

\bibitem{SLEATOR1983362}
Daniel~D. Sleator and Robert {Endre Tarjan}.
\newblock A data structure for dynamic trees.
\newblock {\em Journal of Computer and System Sciences}, 26(3):362--391, 1983.

\end{thebibliography}

\appendix

\section{Flows and Blocking Flows} \label{app:flow} This appendix contains the necessary facts and properties of blocking flows used throughout this draft.  

Consider a directed capacitated network
$H=(W,A,c)$ and a feasible $s,t$ flow $f$ where $s, t \in W$ are distinct vertices.  We denote its residual network by $H_f$. The hop-distance of two vertices $u,v \in W$ in $H_f$, $\mathrm{dist}_f(u,v)$, is the minimum number of edges in a $u,v$ path in $H_f$.

The \emph{level graph} $L_f$ is the subgraph of $H_f$ containing precisely the residual arcs $(u,v)$ satisfying
\[
    \mathrm{dist}_f(s,v)=\mathrm{dist}_f(s, u)+1.
\]
In particular, $L_f$ is a shortest path DAG from the source $s$ in $H_f$ in terms of hop-distance.

A feasible flow $g$ in $L_f$, with capacities given by the residual capacities of $H_f$, is a \emph{blocking flow} if every $s,t$ path in $L_f$ contains an arc saturated by $g$.
Equivalently, after augmenting $f$ by $g$, the residual graph $H_{f+g}$ contains no $s$--$t$ path of hop-distance $\mathrm{dist}_f(s,t)$.  This is the standard blocking-flow framework in Dinitz's maximum-flow algorithm
\cite{dinic1970algorithm,even1975network}.

\begin{observation}
\label{obs:blocking-flow-distance}
Let $g$ be a blocking flow in the level graph $L_f$.  Then
\begin{align*}
    \mathrm{dist}_{f+g}(s,t)>\mathrm{dist}_f(s, t).
\end{align*}
\end{observation}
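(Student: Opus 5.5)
This is Observation~\ref{obs:blocking-flow-distance}, the standard Dinitz distance-monotonicity fact. I would prove it with the usual argument: compare the residual graph $H_{f+g}$ with the level graph $L_f$, using the distance labels $d(v) := \mathrm{dist}_f(s,v)$.

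\textbf{Step 1: new residual arcs go backwards.} Any arc of $H_{f+g}$ is either an arc of $H_f$, or the reverse $(v,u)$ of an arc $(u,v)$ that carried positive flow in $g$. Since $g$ is supported on $L_f$, every arc $(u,v)$ with $g(u,v)>0$ satisfies $d(v)=d(u)+1$. Hence every newly created residual arc $(v,u)$ satisfies $d(u)=d(v)-1$. Every arc $(a,b)$ of $H_f$ satisfies $d(b)\le d(a)+1$ by the triangle inequality for BFS distances. So every arc $(a,b)$ of $H_{f+g}$ satisfies $d(b)\le d(a)+1$, with equality only if $(a,b)\in L_f$.

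\textbf{Step 2: short paths must lie in $L_f$.} Take any $s$–$t$ path $P$ in $H_{f+g}$ with $k$ arcs. Summing the inequality from Step 1 along $P$ gives $d(t)\le k$, i.e.\ $\mathrm{dist}_f(s,t)\le k$. Equality forces every arc of $P$ to satisfy $d(b)=d(a)+1$, so every arc of $P$ is an original arc of $H_f$ lying in $L_f$ (new backward arcs decrease $d$). Thus $P$ is an $s$–$t$ path of $L_f$. Its arcs must also remain residual after augmentation, meaning none of them is saturated by $g$.

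\textbf{Step 3: the blocking property rules this out.} By definition, $g$ saturates some arc on every $s$–$t$ path of $L_f$, and a saturated arc disappears from $H_{f+g}$. This contradicts Step 2. Hence every $s$–$t$ path in $H_{f+g}$ has more than $\mathrm{dist}_f(s,t)$ arcs. If no $s$–$t$ path exists, the distance is $\infty$ and the claim holds trivially.

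The only delicate point is Step 1: checking that the reverse arcs created by augmentation cannot create a shortcut. This is handled exactly by the fact that $g$ lives on the level graph.
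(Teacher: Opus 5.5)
Your proof is correct and is the standard Dinitz distance-monotonicity argument: you use labels $d(v)=\mathrm{dist}_f(s,v)$, observe that newly created residual arcs point one level back, and conclude that a path of length $\mathrm{dist}_f(s,t)$ in $H_{f+g}$ would be an $s$--$t$ path of $L_f$ with no arc saturated by $g$, contradicting the blocking property. The paper gives no proof of its own and simply invokes this fact from the blocking-flow framework of Dinitz and Even--Tarjan, so your write-up supplies exactly the reasoning it relies on; like the statement, it implicitly assumes $\mathrm{dist}_f(s,t)<\infty$, and under that assumption your ``no path'' case is indeed trivial.
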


Thus, augmenting a blocking-flow increases the
$s,t$ hop-distance in the residual graph.  In our algorithms, we do not run these phases
until a maximum flow is obtained.  Instead, we stop after a certain
number of phases. We show the absence of short augmenting paths implies useful structural properties about the densest subgraph.  

When $H_f$ has unit capacities, a blocking flow can be constructed in time $O(|W| + |A|)$. If the capacities of all edges in $H_f$ is divisible by a constant $\varepsilon > 0$ and the summation of the capacities is $O(|W| + |A|)$, one can replace each edge $e$ of $A$ with $\frac{c(e)}{\varepsilon}$ copies to obtain an equivalent unit capacity graph. By doing so, we can find a blocking flow in time $O\left( \frac{|W| + |A|}{\varepsilon} \right)$. In general, when the capacities are polynomially bounded in $n$, a blocking flow of $H_f$ can be found in $O \left( |A| \log |W| \right) \cite{Goldberg1990FindingMC, SLEATOR1983362}$.

\end{document}